\theoremstyle{plain}
\newtheorem{thm}{Theorem}[section]
\newtheorem{lem}[thm]{Lemma}
\newtheorem{pro}[thm]{Proposition}
\newtheorem{cor}[thm]{Corollary}
\newtheorem*{pb}{Problem}
\theoremstyle{definition}
\newtheorem{defn}[thm]{Definition}
\newtheorem{rem}[thm]{Remark}
\newtheorem{exa}[thm]{Example}
\newtheorem{nota}[thm]{Notation}
\newcommand{\rd}{\cellcolor{DarkSeaGreen2}}
\newcommand{\orn}{\cellcolor{Thistle2}}
\newcommand{\yell}{\cellcolor{LightSkyBlue1}}
\newcommand\gr{\cellcolor{LavenderBlush1}}
\newcommand\y{\cellcolor{Azure2}}
\renewcommand{\th}{\theta}
\renewcommand{\TH}{\boldsymbol{\th}}
\newcommand{\LL}{\mathbb{L}}
\newcommand{\F}{\mathbb{F}}
\newcommand{\Fq}{\F_q}
\newcommand{\Fqm}{\F_{q^m}}
\newcommand{\f}{\mathbf{f}}
\newcommand{\Q}{\mathbb{Q}}
\newcommand{\K}{\mathbb{K}}
\newcommand{\ZZ}{\mathbb{Z}}
\newcommand{\NN}{\mathbb{N}}
\renewcommand{\a}{\alpha}
\renewcommand{\b}{\beta}
\newcommand{\efar}{e_{\textrm{far}}}
\newcommand{\Dfar}{\Delta_{\textrm{far}}}
\newcommand{\code}[1]{\mathscr{#1}}
\newcommand{\CC}{\code{C}}
\newcommand{\RM}{\text{RM}}
\renewcommand{\aa}{\mathbf{a}}
\newcommand{\av}{\mathbf{a}}
\newcommand{\bb}{\mathbf{b}}
\newcommand{\cv}{\mathbf{c}}
\newcommand{\ev}{\mathbf{e}}
\newcommand{\nn}{\mathbf{n}}
\newcommand{\vv}{\mathbf{v}}
\newcommand{\yv}{\mathbf{y}}
\newcommand{\wt}{\text{w}_{\text{H}}}
\newcommand{\B}{\mathcal{B}}
\newcommand{\rank}{\text{Rk}\,}
\newcommand{\rk}{\text{Rk}\,}
\newcommand{\Am}{{\mathbf{A}}}
\newcommand{\Dm}{{\mathbf{D}}}
\newcommand{\Dick}{\mathbf{D}}
\newcommand{\Span}[2]{{\left\langle #1 \right\rangle}_{#2}}
\newcommand{\<}{\left<}
\renewcommand{\>}{\right>}
\newcommand{\Gal}{\text{Gal}}
\newcommand{\GG}{\textrm{G}}
\renewcommand{\gg}{\texttt{g}}
\newcommand{\hg}{\texttt{h}}
\newcommand{\hh}{\texttt{h}}
\newcommand{\ii}{\mathbf{i}}
\newcommand{\jj}{\mathbf{j}}
\newcommand{\eqdef}{\stackrel{\text{def}}{=}}
\newcommand{\map}[4]{
  \left\{
  \begin{array}{ccc}
    #1 & \longrightarrow & #2 \\
    #3 & \longmapsto     & #4
  \end{array}
  \right.
}
\newcommand{\pinv}{\varphi^{-1}}
\newcommand{\Id}{\mathrm{Id}}
\newcommand{\OO}{\mathcal{O}}
\colorlet{known}{teal}
\colorlet{unknown}{black}
  \def\longversion{0}
\title{Decoding rank metric Reed--Muller codes} \author{Alain Couvreur}
\author{Rakhi Pratihar} \address{Inria \& Laboratoire LIX, CNRS UMR
  7161, École Polytechnique, Institut Polytechnique de Paris, 1 rue
  Honoré d'Estienne d'Orves, 91120 Palaiseau Cedex}
\email{\{alain.couvreur,rakhi.pratihar\}@inria.fr } \thanks{This
  work was partially funded by the French Agence Nationale de la
  Recherche through the France 2030 ANR project ANR-22-PETQ-0008
  PQ-TLS. The authors are also partially funded by French Grant ANR
  projet \emph{projet de recherche collaboratif}
  ANR-21-CE39-0009-BARRACUDA, and by Horizon-Europe MSCA-DN project
  ENCODE.}
\begin{document}

\begin{abstract}
In this article, we investigate the decoding of the rank metric Reed--Muller codes introduced by Augot, Couvreur, Lavauzelle and Neri in 2021. These codes are defined from Abelian Galois extensions extending the construction of Gabidulin codes over arbitrary cyclic Galois extensions.
We propose a polynomial time algorithm that rests on the structure of Dickson matrices, works on any such code and corrects any error of rank up to half the minimum distance.
\end{abstract}
 \maketitle

\section{Introduction}
\emph{Rank metric codes} were introduced by Delsarte in 1978 \cite{Del} as a set of $m \times n$ matrices over a finite field $\Fq$ where $q$ is a prime power, with a combinatorial interest, where the rank distance between two matrices is measured by the rank of their difference. Thereafter, Gabidulin in 1985 \cite{Gab} and Roth in 1991 \cite{Roth}, independently defined a variant of rank metric codes as a set of vectors of length $n$ over a finite extension $\Fqm$ of $\Fq$, where the rank distance of two vectors in $\Fqm^n$ is given by the dimension of the $\Fq$-space spanned by the coordinates of their difference. In recent years, these codes have attracted significant attention due to their applications to network coding \cite{SKK08}, distributed data storage \cite{TC}, space-time coding \cite{GBL03, Puchinger}, code-based cryptography \cite[Chapter 3.2]{BHLPRW}. Rank metric code constructions can be extended to codes over infinite fields. For instance, mainly for crisscross error correction purpose, Roth presented such a construction over algebraically closed fields in \cite{Roth}, and as more general class as tensor codes with tensor-rank metric over arbitrary field extensions in \cite{Roth96}. In another line of works, the extension of the theory of rank metric codes from finite fields to arbitrary cyclic Galois extensions has been treated thoroughly in \cite{Aug, ALR13, ALR18} and thereafter, to arbitrary finite Galois extensions in \cite{ACLN}.

Finding families of rank metric codes with efficient
decoding algorithms is a problem of interest for various applications, for instance, error-correction in random network coding \cite{SKK08}, distributed data storage \cite{TC}, or cryptography where rank metric codes have been used, among others, to instantiate the GPT scheme \cite{GPT}. More specifically, rank metric codes over infinite fields are used in the field of image processing where the decoding problem is equivalent to the low-rank matrix recovery problem \cite{MPMB16} and also in space-time coding \cite{GBL03,Puchinger}. 
However, only few classes of rank metric codes with effective decoding algorithms are known; simple codes \cite{GHPT}, some families of \emph{maximum rank distance} (MRD) codes including Gabidulin codes and their variants, cf. \cite[Chapter 2]{BHLPRW}, and low-rank parity check (LRPC) codes \cite{RJB} and their interleaved version. Therefore, compared
  to the Hamming metric setting, rank metric still suffers from a lack
  of diversity in terms of families of codes equipped with efficient
  decoding algorithms. This question is of ever growing interest with
  the recent rise of new post--quantum cryptographic primitives where
  the need for efficient solvers of various decoding problems is
  recurrent. Indeed, on one hand, McEliece-like schemes \cite{M78} require codes
  with an efficient decoder and whose structure can be hidden to the
  attackers.  Such a scheme has been instantiated with codes in
  Hamming and rank metric.  On the other hand, many Alekhnovich-like
  schemes in code or lattice based cryptography require a decoder to
  conclude the decryption phase and get rid from a residual noise
  term. See for instance \cite{AABBBBDDGLPRVZ22,A19b}. For these
  reasons, there is a strong motivation in broadening the diversity of
  decodable codes in rank metric: first to improve our
  understanding of decoding problems and second in view toward
  applications to future new cryptographic designs.

Regarding codes over infinite fields, there are decoding algorithms for Gabidulin codes in characteristic zero \cite{ALR18,MPMB16,Rob16} and for optimal array codes over algebraically closed fields \cite{Roth17}. In this paper, we present an efficient new decoding algorithm for the \emph{rank metric Reed--Muller codes} introduced in \cite{ACLN} as subspaces of skew group algebra $\LL[\GG]$ for arbitrary Abelian Galois extension $\LL/\K$ with Galois group $\GG$. This provides a new class of rank metric codes over infinite fields with efficient decoding algorithm.

Rank metric Reed--Muller codes, also called $\TH$-Reed--Muller codes in \cite{ACLN}, where $\TH = (\theta_1, \ldots, \th_m)$ specifies a generating set of the Abelian group $\GG$ (see Definition \ref{theta_RM}), are a ``multivariate'' version of Gabidulin codes which are defined in the case where $\GG$ is cyclic.
The decoding techniques for Gabidulin codes and its variants are mainly syndrome-based decoding \cite{Gab, RP, Roth, GPT} and interpolation-based decoding \cite{KL, KLZ, Li, Loid, Tovo}. 
The decoding algorithm in \cite{Loid} in the finite fields setting is extended in
\cite{ALR18} to the general case. Loidreau \emph{et al.} approach
rests on a Welch--Berlekamp--like approach consisting of computing some
linear polynomial ``localizing'' the errors. This approach fits in the
general paradigm of \emph{error locating pairs} developed in the
Hamming setting by Pellikaan \cite{P92} and independently by K\"otter
\cite{K92}. This paradigm was extended to rank metric by Martinez--Pe\~{n}as and Pellikaan in \cite{MP}. The latter was used in \cite{ACLN} to decode rank metric Reed--Muller codes with rather limited decoding radii.

\medskip

\subsection*{Our Contribution.} In the present paper, we propose an alternative approach based on the use of $\GG$-Dickson matrices. The idea is to reconstruct a $\TH$-polynomial by recovering its coefficients in an iterative manner by a majority voting method on submatrices of the associated $\GG$-Dickson matrix. This majority voting procedure was originally formulated by Massey in \cite{Massey63} for decoding linear systematic codes and later adapted for various other classes of codes, \emph{e.g.}, see \cite{AGM76} and the survey on decoding algebraic geometric codes \cite{HP02}. It is worth mentioning that in the Hamming metric case the majority voting method is employed by transforming the Hamming error into a rank argument on a matrix of syndromes, while we will see that the $\GG$-Dickson matrix representation of the error $\TH$-polynomial already presents a rank constraint that enables to apply a majority voting to recover the unknown coefficients. To our knowledge, this article is the first use of majority voting for decoding rank metric codes. 

Rank metric Reed--Muller codes were introduced in \cite{ACLN} where a
first attempt of decoding was included using the rank analogue of
error correcting pairs \cite{MP}. Denoting by $\omega$ the complexity
exponent of classical linear algebra operations and by $N$ the degree
of the extension $\LL/\K$, the algorithm of \cite{ACLN} required a
complexity of $\OO(N^{2\omega})$ operations in $\K$ to achieve a
decoding radius that was far below half the minimum distance. In this
article, we propose a new algorithm that corrects any error pattern up
to half the minimum distance in $\widetilde{\OO}(N^4)$ operations in
$\K$.

 \if\longversion1
 We study the structure of the particular class of binary $\TH$-Reed-Muller codes, i.e., $\GG \cong (\ZZ/{2\ZZ})^m$. We observe that these codes possess a recursive structure, which can be considered as a rank analogue of the $(u\vert u+v)$--structure for binary Reed-Muller codes in Hamming metric. Depending on characteristic of the base field $\K$ equals to 2 or different from 2, the codes exhibit two different recursive structures. However, for both the cases, we exploit the recursive structure to give a probabilistic decoding algorithm for binary rank Reed--Muller codes that corrects error of rank up to half minimum distance. 
 \fi
 
\subsection*{Organization of the article}
Section \ref{sec:2} introduces the notations used in this paper, as well as basic notions regarding rank metric codes as subspaces of skew group algebras including rank metric Reed-Muller codes, and some properties of $\GG$-Dickson matrices. In Section \ref{sec:dickson_framework}, we give the framework for decoding rank metric Reed-Muller codes using $\GG$-Dickson matrices and we first illustrate this decoding approach for Gabidulin codes in Section \ref{sec:first_examples}. We then describe how this approach can be used for decoding Reed-Solomon codes. 
Section \ref{sec:Dickson_decoding} presents a decoding algorithm for $\TH$-Reed-Muller codes by reconstructing the error $\TH$-polynomial via majority voting for the unknown coefficients using the corresponding $\GG$-Dickson matrix. We show that this approach permits to correct any error pattern of rank up to half the minimum distance. A detailed complexity analysis of the majority voting algorithm is provided in Appendix \ref{sec:appendix_voting}. Finally, we conclude in Section \ref{sec:conclusion} with a brief summary and some open questions.
\if\longversion1
Finally, in Section \ref{Binary_rankRM}, we consider the binary $\TH$-Reed-Muller codes, i.e., for $\GG \cong (\ZZ/2\ZZ)^m$ for $m \ge 2$ and present a recursive structure which can be considered as an rank analogue of of $(u \vert u+v)$  structure of binary Reed-Muller codes with Hamming metric. We then show that this recursive structure can be exploited to decode binary $\TH$-Reed-Muller codes.
\fi

\section*{Acknowledgements}
The authors express their deep gratitude to the referees for their
careful reading of the article and their very relevant comments that
significantly improved the quality of the paper.

 \section{Preliminaries}\label{sec:2}

\subsection{Notation}
Throughout this paper, $\K$ denotes a field, not necessarily finite,
and $\LL$ denotes a finite Galois extension of $\K$.  We use $\GG$ to
denote the Galois group $\Gal(\LL/\K)$ and the elements of $\GG$ are
usually denoted as $\gg_0, \gg_1, \ldots{}, \gg_{m-1}$. By $\B$, we
denote a basis of the finite dimensional vector space $\LL$ over
$\K$. For a $\K$-linear space $V$, the span of vectors
$\boldsymbol{v}_1, \dots, \boldsymbol{v}_t \in V$ is denoted as $\Span{\boldsymbol{v}_1, \dots,
  \boldsymbol{v}_t}{\K}$. The space of linear endomorphisms of $V$ is denoted
${\rm End}_{\K}(V)$. The space of matrices with $m$ rows and $n$ columns with entries in
$\K$ is denoted using $\K^{m \times n}$ and $\LL^n$ denotes the space
of vectors of length $n$ over $\LL$. Matrices are usually denoted in
bold capital letters, their $(i,j)$-th entry is usually denoted as
$\mathbf{A}_{i,j}$ and we use $\rk(\mathbf{A})$ to denote rank of a
matrix $\mathbf{A}$. We will regularly handle finite sets and sets of indices in the sequel, we introduce the notation $[a,b]$ to denote intervals \textbf{of integers}, namely the finite set $\{a, a+1, \dots, b\}$. Given two subsets $I \subseteq [1,m]$ and
$J \subseteq [1,n]$, we denote $\mathbf{A}_{I,J}$ the submatrix
of $\mathbf{A}$ obtained by keeping only entries with indexes in
$I \times J$.

Finally, when handling complexities we will use Landau notation for comparison.
Namely, for $m$ going to infinity we denote
\begin{align*}
  f(m) = \OO(g(m)) \quad &\text{if}\quad \exists M >0, \quad\text{such that} \quad \forall m \geq M,\ f(m)\leq K g(m)\quad \text{for some }K >0;\\
  f(m) = \Omega(g(m)) \quad &\text{if}\quad \exists M >0, \quad\text{such that} \quad \forall m \geq M,\ f(m)\geq K g(m)\quad \text{for some }K >0;\\
  f(m) = \Theta(g(m)) \quad &\text{if both}\quad f(m) = \OO(g(m)) \quad \text{and}\quad f(m) = \Omega(g(m)).
\end{align*}
Also we denote $f(m) = \widetilde{\OO}(g(m))$ if $f(m)=\OO(g(m)P(\log(m)))$ for some polynomial $P$.

\medskip

In this section, we recall the relevant definitions and basic notions
of rank metric codes as well as their various equivalent
representations. We also record some results about Reed-Muller codes
with rank metric from \cite{ACLN} and derive some properties of
$\GG$-Dickson matrices that will be used in the subsequent sections.
\subsection{Matrix codes}
Delsarte introduced rank metric codes in \cite{Del} as $\K$-linear
subspaces of the matrix space $\K^{m \times n}$ where the rank
distance of two codewords (\emph{i.e.}, matrices) $\mathbf{A}, \mathbf{B} \in \K^{m \times n}$
is given by
\[
d_{\rk}(\mathbf{A},\mathbf{B}) = \rank(\mathbf{A}-\mathbf{B}).
\]
Such matrix spaces are called \emph{matrix rank metric codes} and
denoted by \emph{$[m \times n, k,d]_{\K}$--codes} where $k$ denotes the
$\K$-dimension of the code and $d$ denotes the minimum distance, \emph{i.e.} the minimum of the rank distances of any two distinct codewords.

\begin{rem}
  Note that a more abstract point of view can be adopted by
  considering subspaces of the space of $\K$-linear maps from a finite
  dimensional $\K$-linear space $V$ to another $\K$-linear space
  $W$. This point of view is somehow considered in the sequel when we
  deal with subspaces of skew group algebras (see
  \S~\ref{sec:LG-codes}).
\end{rem}

\subsection{Vector codes}
Since the works of Gabidulin \cite{Gab}, the classical literature
on rank metric codes also involves $\LL$-linear subspaces
of $\LL^n$, where the rank of a vector is defined as
\[
  \rk_{\K}(\aa) \eqdef \dim_{\K} \Span{a_1, \dots, a_n}{\K}.
\]
Next, the distance between two vectors $\aa, \, \bb \in \LL^n$ is
defined as
\[
d_{\rk}(\aa,\bb) \eqdef \rk_{\K}(\aa-\bb).
\]
$\LL$--subspaces of $\LL^n$ are called \emph{vector rank metric codes}
and denoted by $[n, k, d]_{\LL/\K}$ codes where $k$ denotes the
$\LL$--dimension of the code and $d$ denotes the minimum distance.

It is well--known that such vector codes actually can be turned into
matrix codes by choosing a $\K$--basis $\B = ({\beta}_1, \dots, {\beta}_m)$ of
$\LL$ and proceeding as follows. Given an element $x$ of $\LL$ denote by $x^{(1)}, \dots, x^{(m)}$ its coefficients in the basis $\B$. That is to say
$x = x^{(1)}{\beta}_1 + \cdots + x^{(m)} {\beta}_m$ then consider the map
\[
  \text{Exp}_\B : \map{\LL^n}{\K^{m \times n}}{(x_1, \dots, x_n)}{
    \begin{pmatrix}
      x_1^{(1)} & \cdots & x_n^{(1)}\\
      \vdots &  & \vdots \\
      x_1^{(m)} & \cdots & x_n^{(m)}      
    \end{pmatrix}.
}
\]
Then, any vector code $\CC \subset \LL^n$ can be turned into a matrix
code by considering $\text{Exp}_{\B}(\CC)$. The induced matrix code
depends on the choice of the basis $\B$ but choosing another basis provides
an isometric code with respect to the rank metric.

\begin{rem}
  Note that if an $\LL$--linear rank metric code can be turned into a
  matrix code, the converse is not true. A subspace of
  $\K^{m\times n}$ can be turned into a $\K$-linear subspace of $\LL^n$
  by applying the inverse map of $\text{Exp}_\B$ but the resulting
  code will not be $\LL$--linear in general. Thus, codes of the form
  $\text{Exp}_\B(\CC)$ when $\CC$ ranges over all $\LL$--subspaces of
  $\LL^n$ form a proper subclass of matrix codes in $\K^{m\times n}$.
\end{rem}

\subsection{Rank metric codes as
  \texorpdfstring{$\LL[\GG]$}{}--codes}\label{sec:LG-codes}

The study of rank metric codes as $\LL$-subspaces of the skew group algebra
$\LL[\GG]$ has been initiated in \cite{ACLN}. It generalizes the
study of rank metric codes over arbitrary cyclic Galois extensions in \cite{ALR18}. We recall below the definitions and
basic notions of rank metric codes in this setting.

Consider an arbitrary but fixed finite Galois extension $\LL/\K$ with
$\GG \eqdef \Gal(\LL/\K)$. The \emph{skew group} algebra $\LL[\GG]$ of $\GG$ over $\LL$ is defined as
\[
\LL[\GG] : = \Bigg\{\sum_{\gg \in \GG} a_\gg \gg ~\colon~ a_\gg \in \LL\Bigg\}
\]
and endowed with its additive group structure and the following composition law derived from the group law of $\GG$:
\[
(a_\gg \gg ) \circ (a_\hg \hg) = (a_\gg \gg(a_\hg)) (\gg \hg),
\]
which is extended by associativity and distributivity.
This equips $\LL[\GG]$ with a non-commutative algebra structure.

\begin{thm}\label{thm:correspondence_poly_endo}
  Any element $A = \sum_\gg a_\gg \gg \in \LL[\GG]$ defines a
  $\K$-endomorphism of $\LL$ that sends $x \in \LL$ to
  $\sum_\gg{a_\gg\gg(x)}$. This correspondence induces
  a $\K$-linear isomorphism between $\LL[\GG]$ and
  $\ensuremath{\textrm{End}}_{\K}(\LL)$.
\end{thm}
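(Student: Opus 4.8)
The statement to prove is Theorem~\ref{thm:correspondence_poly_endo}: the map $\Phi\colon \LL[\G]\to \textrm{End}_\K(\LL)$ sending $A=\sum_\gg a_\gg\gg$ to the endomorphism $x\mapsto \sum_\gg a_\gg\gg(x)$ is a well-defined $\K$-linear isomorphism. The plan is to check well-definedness and $\K$-linearity (both essentially formal), and then to establish bijectivity. For bijectivity, the cleanest route is a dimension count combined with injectivity: both sides are $\K$-vector spaces of dimension $n^2$ where $n=[\LL:\K]=|\G|$, so it suffices to prove that $\Phi$ is injective, i.e.\ that the only element $A$ of $\LL[\G]$ acting as the zero endomorphism is $A=0$. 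This last point is exactly the statement that the characters/automorphisms $\gg_0,\dots,\gg_{n-1}$ are \emph{linearly independent over $\LL$} as functions $\LL\to\LL$, which is Dedekind's (or Artin's) lemma on linear independence of characters.

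\textbf{Step 1: Well-definedness and $\K$-linearity.}
For a fixed $A=\sum_\gg a_\gg\gg$, the map $x\mapsto \sum_\gg a_\gg\gg(x)$ is $\K$-linear because each $\gg\in\G=\Gal(\LL/\K)$ fixes $\K$ pointwise and is additive, and scaling by $a_\gg\in\LL$ preserves $\K$-linearity; a finite sum of $\K$-linear maps is $\K$-linear. Hence $\Phi(A)\in\textrm{End}_\K(\LL)$. Linearity of $\Phi$ itself is immediate from the formula: $\Phi(A+B)$ and $\Phi(A)+\Phi(B)$ agree pointwise, and for $\lambda\in\K$, $\Phi(\lambda A)=\lambda\,\Phi(A)$, since the coefficients $a_\gg$ add and scale coordinatewise. (One may also observe that $\Phi$ is in fact a ring homomorphism with respect to the composition law on $\LL[\G]$, the computation $(a_\gg\gg)\circ(a_\hg\hg)=(a_\gg\gg(a_\hg))(\gg\hg)$ being designed precisely so that $\Phi(A\circ B)=\Phi(A)\circ\Phi(B)$; this is not strictly needed for the present statement but is worth a remark.)

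\textbf{Step 2: Injectivity via linear independence of automorphisms.}
Suppose $\Phi(A)=0$, i.e.\ $\sum_{\gg\in\G}a_\gg\,\gg(x)=0$ for all $x\in\LL$. By Dedekind's lemma, distinct field automorphisms of $\LL$ are linearly independent over $\LL$; since the elements of $\G$ are pairwise distinct automorphisms, we conclude $a_\gg=0$ for every $\gg$, so $A=0$. Thus $\ker\Phi=0$.

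\textbf{Step 3: Conclude by dimension count.}
Both $\LL[\G]$ and $\textrm{End}_\K(\LL)$ are finite-dimensional $\K$-vector spaces: $\dim_\K\LL[\G]=|\G|\cdot\dim_\K\LL=n\cdot n=n^2$ (a $\K$-basis is $\{\beta_i\,\gg_j\}$ over $\B=(\beta_1,\dots,\beta_n)$), and $\dim_\K\textrm{End}_\K(\LL)=n^2$ as well. An injective $\K$-linear map between finite-dimensional $\K$-vector spaces of equal dimension is an isomorphism, so $\Phi$ is a $\K$-linear isomorphism, proving the theorem.

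\textbf{Main obstacle.}
Every step is routine except the injectivity, and even there the only real content is invoking Dedekind/Artin linear independence of characters; so the ``hard part'' is merely recognizing that this classical lemma is what is needed (and, if one wants the paper to be self-contained, recalling its short proof by minimal-length-relation/contradiction). If the authors prefer to avoid the dimension count, an alternative is to exhibit an explicit inverse by using a normal basis or by expressing an arbitrary $f\in\textrm{End}_\K(\LL)$ in terms of the $\gg$'s via the dual basis, but the injectivity-plus-dimension argument is shorter and is the approach I would take.
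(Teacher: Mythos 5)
Your proof is correct and complete; the paper itself does not argue this statement but simply cites \cite[Thm.~1]{ACLN}, and your argument (well-definedness, injectivity via Dedekind--Artin linear independence of the automorphisms in $\G$ over $\LL$, then a dimension count $\dim_\K \LL[\G] = |\G|\cdot[\LL:\K] = n^2 = \dim_\K \textrm{End}_\K(\LL)$) is precisely the standard route taken in that reference. Nothing is missing.
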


\begin{proof}
  See for instance {\cite[Thm.~1]{ACLN}}.
\end{proof}
Thus, the \emph{rank} of an element $A \in \LL[\GG]$ is well-defined as its rank when viewed as a $\K$-linear endomorphism of $\LL$.
From the above theorem, it is clear that with respect to a fixed basis
$\B = ({\beta}_1, \dots, {\beta}_m)$ of $\LL/\K$, we get $\K$--linear isomorphisms
\begin{equation}\label{representation}
  \LL[\GG] \cong \textrm{End}_\K(\LL) \cong \K^{m\times m}.
\end{equation}
Also, w.r.t. the basis $\B = \{ {\beta}_1, \ldots, {\beta}_m\}$, every element  $A \in \LL[\GG]$ can be seen as a vector
\[
  \textbf{a} = (A({\beta}_1), \ldots, A({\beta}_m)) \in \LL^m.
\]

Moreover, the aforementioned definition of rank for a vector of $\LL^m$
coincides with the rank of $A$ when regarded as a $\K$--endomorphism
of $\LL$ (according to Theorem~\ref{thm:correspondence_poly_endo}).

\begin{rem}
In the particular case of a Galois extension of finite fields
$\LL/\K$, the group $\GG$ is cyclic, say, $\GG = \< \sigma \>$ and there
are many characterizations of $\LL[\GG]$ studied in \cite{WL13}. One of
the very well--known characterization is in terms of \emph{linear
polynomials} studied by Ore \cite{Oreqpoly} followed by his work on the
theory of non-commutative polynomials \cite{Orenoncom}.  Let
$\K = \Fq$ and $\LL = \Fqm$ for some prime power $q$ and a positive
integer $m$, then the \emph{linear polynomials} over $\Fqm$ are given by
\[
  L(x) = \sum\limits_{i=0}^{d} a_ix^{q^i},\ \  \text{for\ some\ } d \in \NN \quad
  \text{and}\quad a_0, \dots, a_d \in \Fqm
\]
and endowed with the composition law to give a structure of (non
commutative) ring. With this point of view, the skew group algebra
$\Fqm[\GG]$ is isomorphic to the ring of linear polynomials modulo
the two--sided ideal generated by $x^{q^m}-x$.
\end{rem}

\begin{defn}
    An $\LL$-linear rank metric code $\CC$ in the skew group algebra $\LL[\GG]$ is an $\LL$-linear subspace of $\LL[\GG]$, equipped with the rank distance. The dimension of $\CC$ is defined as its dimension as $\LL$-vector space. The minimum rank distance is defined as 
    \[
    d(\CC) \eqdef \min \{\rk(A) ~\colon~ A \in \CC\setminus \{0\}\},
  \]
  where the rank of $A \in \LL [\GG]$ is the rank of the
  $\K$--endomorphism it induces on $\LL$ (according to
  Theorem~\ref{thm:correspondence_poly_endo}).
  \end{defn}
  
  We denote the parameters of an $\LL$-linear rank metric code
  $\CC \subseteq \LL[\GG]$ of dimension $k$ and minimum distance $d$ by
  $[m,k,d]_{\LL [\GG]}$ where $m$ denotes the extension degree
  $[\LL : \K]$. If $d$ is unknown or clear from the context, we
  simply write $[m,k]_{\LL[\GG]}$.

As observed earlier, an element of $\LL[\GG]$ can be seen as a
  $\K$-linear endomorphism of $\LL$. Therefore, if we fix a $\K$-basis
  $\B$ of $\LL$, then, after suitable choices of bases, one can
  transform an $[m,k,d]_{\LL[\GG]}$--code $\CC$ into an
  $[m \times m, k, d]_{\K}$--code or into an $[m,k,d]_{\LL/\K}$ code.

\subsection{Rank metric Reed-Muller codes}\label{sec:Rank_RM}
Introduced in \cite{ACLN}, rank metric Reed--Muller codes are subcodes
of the skew group algebra of a finite extension whose Galois group is
a product of cyclic groups.
Let $\GG$ be the product of cyclic groups
$\ZZ/{n_1\ZZ} \times \cdots \times \ZZ/{n_m \ZZ}$. For the skew group
algebra representation, we need a multiplicative description of the
group. For this sake, we introduce a system of generators:
$\theta_1, \ldots, \theta_m$ so that
$\theta_1^{i_1}\cdots \theta_m^{i_m}$ describes the $m$--tuple
$(i_1, \dots, i_m) \in \ZZ/n_1 \ZZ \times \cdots \times \ZZ/n_m\ZZ$.

Let $\mathbf{n}$ denote the tuple $(n_1, \dots, n_m)$. Denote
\begin{equation}\label{eq:LambdaTheta}
  \Lambda(\mathbf{n}) \eqdef [0,n_1-1] \times \cdots \times [0,n_m-1] \quad
  \text{and} \quad
  \TH^{\mathbf{i}} \eqdef \theta_1^{i_1} \cdots
  \th_m^{i_m} \in \GG \quad \text{for} \quad
  \mathbf{i} = (i_1,\dots, i_m) \in \mathbb{N}^m.
\end{equation}
Then $\GG = \{\TH^{\mathbf{i}} ~\colon~ \mathbf{i} \in \Lambda(\mathbf{n})\}$ and hence any $P \in \LL[\GG]$ has a unique representation  
    \[
    P = \sum\limits_{\mathbf{i} \in \Lambda(\mathbf{n})} b_{\mathbf{i}}\TH^{\mathbf{i}}.
    \]
Because of the above description, elements $P \in \LL [\GG]$ are referred
    to as \emph{$\TH$--polynomials}.
    
    \begin{defn}\label{thpoly}
      The \emph{$\TH$-degree} of a $\TH$--monomial
      $\TH^{\mathbf{i}} = \theta_1^{i_1}\cdots \theta_m^{i_m}$ where
      $\mathbf i \in \Lambda(\mathbf{n})$ is $i_1+\cdots + i_m$.  The
      \emph{$\TH$--degree} of a $\TH$-polynomial $P\in \LL [\GG]$,
      denoted $\deg_{\TH} P$ is the maximal degree of its monomials.
    \end{defn}

A rank analogue of Reed-Muller code is defined as follows.
    \begin{defn}[{\cite[Def.~8.1]{ACLN}}]\label{theta_RM}
       Let $r \in \NN$ such that $r \le \sum_i (n_i -1)$.
        The $\TH$-Reed-Muller code of order $r$ and type $\nn$ is
        \[
        {\RM}_{\TH}(r, \nn) \eqdef \{P \in \LL[\GG] ~\colon~ \deg_{\TH} P \le r \}.
        \]
    \end{defn}

    \begin{rem}
      As mentioned in \cite[Rem.~45]{ACLN}, the definitions of
      $\TH$--degree and $\TH$--Reed-Muller codes depend on the choice
      of the generators of $\GG$. Note that even in the setting of
      cyclic extensions, different choices of generators provide
      either Gabidulin codes or generalized Gabidulin codes.
    \end{rem}

    With respect to a fixed basis $\B=({\beta}_1, \dots, {\beta}_m)$ of
    the finite Galois extension $\LL/\K$, the $\TH$-Reed-Muller code
    can be seen as vector code as
\[
  \{(P({\beta}_1), \dots, P({\beta}_m)) ~:~ P \in \LL[\GG], \, \deg_{\TH}(P) \leq r\} \subseteq
  \LL^{N}.
\]
Finally, the exact parameters of these codes are known.

\begin{thm}[{\cite[Prop.~48 \& Thm.~50]{ACLN}}]\label{min}
  For $\mathbf{n} = (n_1 \ge n_2 \ge \cdots \ge n_m\ge 2)$, let $s $ and $\ell$ be the unique integers such that
  $r = \sum_{i=s+1}^{m} (n_i - 1) + \ell$ with
  $0 \leq \ell < n_s$. Then the code $\RM_{\TH}(r, \nn)$ has
  dimension $|\{ \mathbf{i} \in \Lambda(\nn)\colon |\mathbf{i}| \le r\}|$ and minimum distance
\[ d = (n_s - \ell) \prod\limits_{i=1}^{s-1}n_i.\]
\end{thm}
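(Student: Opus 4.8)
The plan is to prove the dimension formula first and then the two halves of the minimum-distance statement separately.

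\textbf{Dimension.} This part is immediate: since $\{\TH^{\mathbf i}:\mathbf i\in\Lambda(\nn)\}$ is an $\LL$-basis of $\LL[\G]$ and $\deg_\TH\!\bigl(\sum_{\mathbf i}b_{\mathbf i}\TH^{\mathbf i}\bigr)=\max\{i_1+\cdots+i_m:b_{\mathbf i}\neq0\}$, the code $\RM_\TH(r,\nn)$ is the $\LL$-span of the linearly independent monomials $\TH^{\mathbf i}$ with $i_1+\cdots+i_m\le r$; hence its $\LL$-dimension equals the number of such $\mathbf i\in\Lambda(\nn)$. No hypothesis on the $n_i$ is used here.

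\textbf{Upper bound.} For $d\le(n_s-\ell)\prod_{i<s}n_i$ I would exhibit an explicit codeword. Put $H=\langle\theta_{s+1},\dots,\theta_m\rangle$, $\mathbb F_s=\LL^H$ and $\mathbb F_{s-1}=\LL^{\langle\theta_s,\dots,\theta_m\rangle}$, so $\dim_\K\mathbb F_{s-1}=n_1\cdots n_{s-1}$, $[\mathbb F_s:\mathbb F_{s-1}]=n_s$, and $\theta_s$ restricts to a generator of $\Gal(\mathbb F_s/\mathbb F_{s-1})$. Because $H$ is a direct product of the $\langle\theta_i\rangle$, expanding the product gives $\prod_{i=s+1}^m\bigl(\sum_{j=0}^{n_i-1}\theta_i^j\bigr)=\sum_{\gg\in H}\gg=\mathrm{Tr}_{\LL/\mathbb F_s}$, which is surjective onto $\mathbb F_s$. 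Now choose $g_0,\dots,g_\ell\in\mathbb F_s$ so that the $\theta_s$-linearized polynomial $G=\sum_{j=0}^\ell g_j\theta_s^j$ restricts on $\mathbb F_s$ to a minimum-rank codeword of the Gabidulin code $\mathbf{Gab}[n_s,\ell+1]$ of $\mathbb F_s/\mathbb F_{s-1}$, so that $\dim_{\mathbb F_{s-1}}G(\mathbb F_s)=n_s-\ell$ (take $G=\Id$ if $\ell=0$). Setting $P^{\star}:=G\circ\mathrm{Tr}_{\LL/\mathbb F_s}$, one has $\deg_\TH P^{\star}\le\ell+\sum_{i>s}(n_i-1)=r$, hence $P^{\star}\in\RM_\TH(r,\nn)$, while $\mathrm{Im}(P^{\star})=G(\mathbb F_s)$ has $\K$-dimension $(n_s-\ell)\,n_1\cdots n_{s-1}$; thus $\rk(P^{\star})=(n_s-\ell)\prod_{i<s}n_i$.

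\textbf{Lower bound.} The harder direction is $\rk(P)\ge(n_s-\ell)\prod_{i<s}n_i$, i.e.\ $\dim_\K\ker P\le m-(n_s-\ell)\prod_{i<s}n_i$, for every nonzero $P$ with $\deg_\TH P\le r$. I would run this through the $\G$-Dickson matrix $\Dick(P)$ of $P$, whose $\LL$-rank equals the $\K$-rank of the endomorphism $P$ induces on $\LL$ (via Theorem~\ref{thm:correspondence_poly_endo}): its $(\TH^{\mathbf a},\TH^{\mathbf b})$-entry is, up to a Galois conjugation, a coefficient of $P$ indexed by the group-element difference of $\TH^{\mathbf a}$ and $\TH^{\mathbf b}$, and therefore vanishes unless that element has $\TH$-degree $\le r$. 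Fix the leading monomial $\TH^{\mathbf i_0}$ of $P$ for a graded term order (so that coefficient is nonzero). The plan is then to produce a set $A\subseteq\Lambda(\nn)$ with $|A|=d:=(n_s-\ell)\prod_{i<s}n_i$, its translate $B$ by $\TH^{\mathbf i_0}$ in $\G$, and an ordering $\mathbf a_1,\dots,\mathbf a_d$ of $A$ such that the $d\times d$ submatrix of $\Dick(P)$ with rows $A$ and columns $B$ is triangular with nonzero diagonal (the diagonal entries being Galois conjugates of the leading coefficient); this submatrix is invertible, giving $\rk(P)\ge d$. An equivalent route is induction on $m$: writing $P=\sum_{j=0}^{n_m-1}\tilde P_j\circ\theta_m^j$ with $\tilde P_j$ a $(\theta_1,\dots,\theta_{m-1})$-polynomial of degree $\le r-j$ and analysing $\ker P$ along the tower $\K\subset\LL^{\langle\theta_m\rangle}\subset\LL$, the inductive bounds on the $\dim_\K\ker\tilde P_j$ combine in a staircase fashion; it is here that $n_m$ being the smallest of the $n_i$ --- hence the hypothesis $n_1\ge\cdots\ge n_m$ --- makes the combination extremal.

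\textbf{Main obstacle.} The dimension and the upper bound are routine; the entire difficulty lies in the lower bound, and specifically in the sharp combinatorial estimate --- the existence of the size-$d$ triangular submatrix of $\Dick(P)$, equivalently the tight bound on $\dim_\K\ker P$. This is a rank-metric analogue of the Kasami--Lin--Peterson minimum-distance theorem for generalized Reed--Muller codes over the mixed-radix grid $\prod_i[0,n_i-1]$: one must not merely bound the corank but show that the extremal kernel is (the restriction of) a subfield-type subspace, which is exactly what forces the constant $(n_s-\ell)\prod_{i<s}n_i$ and makes it match the codeword $P^{\star}$ above. Handling the wrap-around in $\G$ when selecting the rows and columns of the submatrix --- equivalently, the interaction of the $\theta_m$-degree filtration with the kernel in the induction --- while keeping the estimate tight and exploiting the decreasing order of the $n_i$, is where essentially all the work is.
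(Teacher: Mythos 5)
First, a point of comparison: the paper offers no proof of this statement at all --- it is imported verbatim from \cite[Prop.~48 \& Thm.~50]{ACLN} --- so there is no in-paper argument to measure yours against; what follows assesses your proposal on its own. Your dimension count is correct (and matches what the statement must mean: ``dimension $\Lambda(\nn)$'' is a slip for the number of $\ii\in\Lambda(\nn)$ with $|\ii|\le r$). Your upper-bound construction is also sound: with $H=\langle\theta_{s+1},\dots,\theta_m\rangle$ one indeed has $\sum_{\gg\in H}\gg=\mathrm{Tr}_{\LL/\LL^{H}}$ as an element of $\LL[\G]$ of $\TH$-degree $\sum_{i>s}(n_i-1)$, surjective onto $\LL^H$, and composing with a degree-$\ell$ $\theta_s$-polynomial of rank $n_s-\ell$ over the fixed field of $\langle\theta_s,\dots,\theta_m\rangle$ produces a codeword of $\TH$-degree $\le r$ and rank exactly $(n_s-\ell)\prod_{i<s}n_i$ (modulo citing the existence of annihilator $\theta_s$-polynomials over cyclic Galois extensions, e.g.\ \cite{ALR18}).

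The genuine gap is the lower bound, and you say so yourself: ``the plan is then to produce a set $A$\dots'' is a statement of intent, not a proof. For the Dickson-matrix route, the required combinatorial fact is that for \emph{every} admissible leading index $\ii_0$ with $|\ii_0|\le r$ there exist $d$ rows $\av_1,\dots,\av_d$ such that the induced digraph with edges $\av\mapsto\av+\jj$ (over all $\jj$ with $\jj\preceq\ii_0$ in your term order, differences taken in $\G$) is acyclic; even in the smallest case $\nn=(3,3)$, $r=1$, $d=6$ this already requires a case split on $\ii_0$ and a careful choice of $A$ breaking all $3$-cycles and mixed cycles created by the wrap-around, and nothing in your sketch establishes it in general. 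The alternative induction on $m$ is likewise only named, with no statement of the inductive inequality or of how the summands ``combine in a staircase fashion.'' Two smaller remarks: (i) for the bound $\rk(P)\ge d$ you do \emph{not} need to show that extremal kernels are subfield-type subspaces --- that is only needed to classify minimum-weight codewords, so you are overstating the burden there; (ii) the role you ascribe to $n_1\ge\cdots\ge n_m$ is speculative --- that hypothesis primarily normalizes which decomposition $r=\ell+\sum_{i>s}(n_i-1)$ and hence which closed-form $d$ one writes down. As it stands the proposal proves the dimension and $d\le(n_s-\ell)\prod_{i<s}n_i$, but not the reverse inequality, which is the substance of the theorem.
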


\begin{exa}\label{runningexample}
  Let us fix $\K = \Q$, and $\LL$ be the splitting field of the
  polynomial $(x^2 - 2)(x^2 -3) (x^2 -5)$. Therefore,
  $\LL = \Q(\sqrt{2}, \sqrt{3}, \sqrt{5})$ and the Galois group
  $\GG = \Gal(\LL/\K)$ is isomorphic to the Abelian group
  $(\ZZ/{2\ZZ})^3$, which is generated by the automorphisms $\th_i$,
  for $i = 1, 2, 3$, defined as
\[
\th_1 \colon 
\begin{cases}
\sqrt{2} \mapsto - \sqrt{2}\\
\sqrt{3} \mapsto \sqrt{3}\\
\sqrt{5} \mapsto \sqrt{5}
\end{cases} \quad \text{and} \quad 
\th_2 \colon
\begin{cases}
\sqrt{2} \mapsto  \sqrt{2}\\
\sqrt{3} \mapsto - \sqrt{3}\\
\sqrt{5} \mapsto \sqrt{5}
\end{cases}
\quad \text{and} \quad 
\th_3 \colon
\begin{cases}
\sqrt{2} \mapsto  \sqrt{2}\\
\sqrt{3} \mapsto \sqrt{3}\\
\sqrt{5} \mapsto - \sqrt{5}.
\end{cases}
\]
Consider the rank metric code $\CC = \RM_{\TH}(1,(2,2,2))$ given by 
\begin{equation}\label{exa}
\CC\eqdef \{ a \cdot \Id + b \cdot \th_1 + c \cdot \th_2 + d \cdot \th_3 ~\colon~ a, b, c, d \in \LL\}.
\end{equation}
According to Theorem~\ref{min} this code is $[8,4,4]_{\LL/\K}$.  We
fix the following ordered basis
\[\B = \{1, \sqrt{2}, \sqrt{3}, \sqrt{6}, \sqrt{5}, \sqrt{10},
  \sqrt{15}, \sqrt{30} \}\] of $\LL/\K$. Then, the
$[8,4]_{\LL/\K}$ code $\CC(\B)$ is generated by the $4 \times 8$ matrix
\[
\begin{pmatrix}
1 & \sqrt{2} & \sqrt{3} &  \sqrt{6} & \sqrt{5} & \sqrt{10} & \sqrt{15} &\sqrt{30} \\
1 & - \sqrt{2} & \sqrt{3} & - \sqrt{6}&\sqrt{5}  & - \sqrt{10} & \sqrt{15} & - \sqrt{30}\\
1 & \sqrt{2} & - \sqrt{3} & - \sqrt{6} & \sqrt{5} & \sqrt{10} & - \sqrt{15} & - \sqrt{30} \\
1 & \sqrt{2} & \sqrt{3}  &\sqrt{6}  & - \sqrt{5}  & - \sqrt{10} &  - \sqrt{15}  & - \sqrt{30} 
\end{pmatrix}.
\]

We can also represent the codewords as $8 \times 8$ matrices over $\K$
which are the coordinate matrices w.r.t. the basis $\B$. The matrices
that represent the multiplication by the elements of the basis $\B$
are of the form $A^i B^j C^k$ for $i, j, k \in \{ 0,1\}$, where
\[\footnotesize\setlength\arraycolsep{3pt}
A = \begin{pmatrix} 
0 &  2 &  0 & 0 & 0 & 0 & 0 & 0\\
1 &  0 &  0 & 0 & 0 & 0 & 0 & 0\\
0 &  0 &  0 & 2 & 0 & 0 & 0 & 0\\
0 &  0 &  1 & 0 & 0 & 0 & 0 & 0\\
0 &  0 &  0 & 0 & 0 & 2 & 0 & 0\\
0 &  0 &  0 & 0 & 1 & 0 & 0 & 0\\
0 &  0 &  0 & 0 & 0 & 0 & 0 & 2\\
0 &  0 &  0 & 0 & 0 & 0 & 1 & 0
 \end{pmatrix},  
\, B =
\begin{pmatrix}
0 &  0 &  3 & 0 & 0 & 0 & 0 & 0\\
0 &  0 &  0 & 3 & 0 & 0 & 0 & 0\\
1 &  0 &  0 & 0 & 0 & 0 & 0 & 0\\
0 &  1 &  0 & 0 & 0 & 0 & 0 & 0\\
0 &  0 &  0 & 0 & 0 & 0 & 3 & 0\\
0 &  0 &  0 & 0 & 0 & 0 & 0 & 3\\
0 &  0 &  0 & 0 & 1 & 0 & 0 & 0\\
0 &  0 &  0 & 0 & 0 & 1 & 0 & 0
\end{pmatrix},
\, C = 
\begin{pmatrix}
0 &  0 &  0 & 0 & 5 & 0 & 0 & 0\\
0 &  0 &  0 & 0 & 0 & 5 & 0 & 0\\
0 &  0 &  0 & 0 & 0 & 0 & 5 & 0\\
0 &  0 &  0 & 0 & 0 & 0 & 0 & 5\\
1 &  0 &  0 & 0 & 0 & 0 & 0 & 0\\
0 &  1 &  0 & 0 & 0 & 0 & 0 & 0\\
0 &  0 &  1 & 0 & 0 & 0 & 0 & 0\\
0 &  0 &  0 & 1 & 0 & 0 & 0 & 0
 \end{pmatrix}
\]
represent the multiplication by $\sqrt{2}$, $\sqrt{3}$, and $\sqrt{5}$, respectively. 

The matrix representation of the code is the $\Q$-span of the set
\[\{A^i B^j C^k, A^i B^j C^k X, A^i B^j C^k Y, A^i B^j C^k Z ~\colon~ 0
\le i, j, k \le 1 \},\] where
\[\footnotesize\setlength\arraycolsep{3pt}
X = \begin{pmatrix}
1 &  0  &  0 &  0 &  0 &  0  & 0 & 0\\
0 & -1  &  0 &  0 &  0 &  0  & 0 & 0\\
0 &  0  &  1 &  0 &  0 &  0  & 0 & 0\\
0 &  0  &  0 & -1 &  0 &  0  & 0 & 0\\
0 &  0  &  0 &  0 &  1 &  0  & 0 & 0\\
0 &  0  &  0 &  0 &  0 & -1  & 0 & 0\\
0 &  0  &  0 &  0 &  0 &  0  & 1 & 0\\
0 &  0  &  0 &  0 &  0 &  0  & 0 & -1
\end{pmatrix},
\, \quad
Y = \begin{pmatrix}
1 &  0  &  0 &  0 &  0 &  0  & 0 & 0\\
0 &  1  &  0 &  0 &  0 &  0  & 0 & 0\\
0 &  0  & -1 &  0 &  0 &  0  & 0 & 0\\
0 &  0  &  0 & -1 &  0 &  0  & 0 & 0\\
0 &  0  &  0 &  0 &  1 &  0  & 0 & 0\\
0 &  0  &  0 &  0 &  0 &  1  & 0 & 0\\
0 &  0  &  0 &  0 &  0 &  0  &-1 & 0\\
0 &  0  &  0 &  0 &  0 &  0  & 0 & -1
\end{pmatrix},
\quad
\footnotesize\setlength\arraycolsep{3pt}
Z= \begin{pmatrix}
1 &  0  &  0 &  0 &  0 &  0  & 0 & 0\\
0 &  1  &  0 &  0 &  0 &  0  & 0 & 0\\
0 &  0  &  1 &  0 &  0 &  0  & 0 & 0\\
0 &  0  &  0 &  1 &  0 &  0  & 0 & 0\\
0 &  0  &  0 &  0 & -1 &  0  & 0 & 0\\
0 &  0  &  0 &  0 &  0 & -1  & 0 & 0\\
0 &  0  &  0 &  0 &  0 &  0  &-1 & 0\\
0 &  0  &  0 &  0 &  0 &  0  & 0 & -1
\end{pmatrix}
\]
represent the matrices $\th_1$, $\th_2$ and $\th_3$ in the basis $\B$, respectively.
\end{exa}

\subsection{Dickson matrices}
A crucial tool in the sequel consists in identifying the elements of
the skew group algebra with their corresponding \emph{$\GG$-Dickson matrices}
that we define below. For defining $\GG$-Dickson matrices, first we fix
an ordering $\{\gg_0, \ldots, \gg_{m-1}\}$ on $\GG$. Also, let
$\sigma_i \in \mathfrak{S}_m$ be the permutation representation of $\gg_i$
induced by the left action of $\GG$ onto itself, \emph{i.e.},
\[\sigma_i(j) = k \quad \text{if}  \quad \gg_i\gg_j = \gg_k \text{ for } j \in \{0, \ldots, {m-1}\}.\]

\begin{defn}[{\cite[Def.~14]{ACLN}}]\label{Dicksondefn}
  The $\GG$-Dickson matrix associated to
  $A= \sum_{i=0}^{m-1} a_i \gg_i \in \LL[\GG]$ is the matrix
  representing the following $\LL$--linear map in the basis
  $(\gg_0, \dots, \gg_{m-1})$:
  \[
    \map{\LL[\GG]}{\text{End}_{\LL}(\LL[\GG])}{A}{(B \mapsto B \circ A).}    
    \]
    It is defined as
  $\Dick_{\GG}(A) \eqdef {(d_{i,j})}_{i,j} \in \LL^{m \times m}$ where
    \(d_{i,j} = \gg_j\left(a_{\sigma_{j}^{-1}(i)}\right)\).
    \end{defn}
    The $\GG$-Dickson matrices are indeed the usual Dickson matrices for an extension of finite fields $\LL/\K$ as we record in the example below.

\begin{exa}
  For
  $\LL = \F_{q^m}, \, \text{ let } \GG = \Gal(\F_{q^m}/\F_q) = \< \th
  \>,$ where $\th$ is the Frobenius map. Then w.r.t. the ordered basis
  $(Id, \th, \dots, \th^{m-1})$ of $\LL[\GG]$, the $\GG$-Dickson matrix
  of $F = \sum_{i=0}^{m-1} f_i x^{[i]}$ is
    \begin{equation}\label{finiteDickson} \Dick_\GG(F) =\begin{pmatrix}
    f_0 &f_{m-1}^q &\cdots &f_1^{q^{m-1}}\\
    f_1 &f_0^{q} &\cdots &f_2^{q^{m-1}}\\
    \vdots & &\ddots &\\
    f_{m-1} & f_{m-2}^q &\cdots &f_0^{q^{m-1}}
 \end{pmatrix}.
 \end{equation}

\end{exa}

One of many equivalent ways of determining rank of an element of $\LL[\GG]$ is by rank of its $\GG$-Dickson matrix. This is a generalization of the finite field case (see, \emph{e.g.}, \cite{WL13}).

\begin{pro}[{\cite[Thm.~21 \& Thm.~24]{ACLN}}]\label{prop:Dickson_rank}
The algebra
\[
  \mathcal{D} (\LL /\K ) \eqdef \left\{ \Dick_G(A)^\top  ~:~  A \in  \LL[\GG]\right\}  \subseteq  \LL^{m \times m}
\]
is isomorphic to $\LL[\GG]$. Moreover, for any $A \in \LL [\GG]$, we
have $\rk(A) = \rank (\Dick_{G}(A))$.
\end{pro}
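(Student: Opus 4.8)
The plan is to derive both assertions from the \emph{right regular representation} of $\LL[\G]$ on itself, of which $\Dick_\G$ is, by Definition~\ref{Dicksondefn}, a matrix incarnation. For $A\in\LL[\G]$ let $\rho(A)\colon\LL[\G]\to\LL[\G]$ denote the map $B\mapsto B\circ A$. Since left scalars pass through the composition law of $\LL[\G]$, each $\rho(A)$ is $\LL$-linear for the left $\LL$-module structure of $\LL[\G]$, and $\Dick_\G(A)$ is, by definition, the matrix of $\rho(A)$ in the basis $(\gg_0,\dots,\gg_{m-1})$.

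For the first assertion I would record the formal properties of $A\mapsto\rho(A)$: it is additive, it sends $\Id$ to the identity map, it satisfies $\rho(\lambda A)=\lambda\,\rho(A)$ for $\lambda\in\K$ (\emph{not} for general $\lambda\in\LL$, since $\gg(\lambda)\ne\lambda$ in general), it is injective because $\rho(A)(\Id)=A$, and — by associativity of $\circ$ — it reverses products: $\rho(A\circ A')=\rho(A')\circ\rho(A)$. Passing to matrices in the basis $(\gg_0,\dots,\gg_{m-1})$ translates these into the statements that $A\mapsto\Dick_\G(A)$ is a $\K$-linear, injective, unital \emph{anti}-homomorphism, i.e. $\Dick_\G(A\circ A')=\Dick_\G(A')\Dick_\G(A)$; transposing then turns it into a homomorphism, so $A\mapsto\Dick_\G(A)^\top$ is an injective unital $\K$-algebra homomorphism of $\LL[\G]$ into $\LL^{m\times m}$ whose image, by definition, is $\mathcal D(\LL/\K)$. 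Hence $\mathcal D(\LL/\K)$ is a subalgebra of $\LL^{m\times m}$ isomorphic to $\LL[\G]$, and the left $\LL$-module structure is carried across this isomorphism as well (it becomes left multiplication by the diagonal matrices $\Dick_\G(\lambda\,\Id)^\top$). This also explains why the statement is phrased with $\Dick_\G(\cdot)^\top$ rather than $\Dick_\G(\cdot)$.

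For the rank equality I would identify $\LL[\G]$ with $\mathrm{End}_\K(\LL)$ via Theorem~\ref{thm:correspondence_poly_endo}, writing $\phi_B$ for the endomorphism attached to $B$; this identification turns $\circ$ into composition of endomorphisms, so $\phi_{B\circ A}=\phi_B\circ\phi_A$, and by definition $\rk(A)=\rank(\phi_A)$. Consequently $\rho(A)$ corresponds to the precomposition operator $\psi\mapsto\psi\circ\phi_A$ on $\mathrm{End}_\K(\LL)$, whose image is $\{\psi\circ\phi_A\colon\psi\in\mathrm{End}_\K(\LL)\}$. An elementary factorisation argument (a $\K$-endomorphism of $\LL$ is of the form $\psi\circ\phi_A$ precisely when it vanishes on $\ker\phi_A$) identifies this image with $\mathrm{Hom}_\K(\LL/\ker\phi_A,\LL)$, a $\K$-vector space of dimension $m\cdot\rank(\phi_A)$. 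On the other hand $\rho(A)$ is $\LL$-linear, so its image is an $\LL$-subspace of $\LL[\G]$ of $\LL$-dimension $\rank(\Dick_\G(A))$ — the rank of the matrix of $\rho(A)$ — hence of $\K$-dimension $m\cdot\rank(\Dick_\G(A))$, because $[\LL:\K]=m$. Comparing the two counts yields $\rank(\Dick_\G(A))=\rank(\phi_A)=\rk(A)$.

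The only step that is not pure bookkeeping is the description of the image (equivalently, the kernel) of a precomposition operator on an endomorphism algebra, which is a short linear-algebra exercise; the point requiring care is that $\Dick_\G$ is merely $\K$-linear, not $\LL$-linear, in $A$, so one must consistently distinguish $\K$-dimensions from $\LL$-dimensions. Alternatively one may compute $\ker\rho(A)=\{B\colon\mathrm{im}\,\phi_A\subseteq\ker\phi_B\}\cong\mathrm{Hom}_\K(\LL/\mathrm{im}\,\phi_A,\LL)$ and invoke rank--nullity, obtaining the same identity.
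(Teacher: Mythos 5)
The paper does not actually prove this proposition: it is imported verbatim from \cite{ACLN} (Thm.~21 \& Thm.~24), so there is no in-paper argument to compare yours against. That said, your proof is correct and self-contained. The first part is the standard observation that $A\mapsto\rho(A)$ (right composition) is an injective, unital, $\K$-linear \emph{anti}-homomorphism — you correctly verify anti-multiplicativity via $B\circ(A\circ A')=(B\circ A)\circ A'$, injectivity via $\rho(A)(\Id)=A$, and you rightly note that $\Dick_\G$ is only $\K$-linear in $A$ (since $B\circ(\lambda A)$ involves $\hh(\lambda)$), which is exactly why the transpose is needed to land in an honest subalgebra. The second part is the nicest piece: identifying $\mathrm{im}\,\rho(A)$ with $\{\chi\in\mathrm{End}_\K(\LL):\chi|_{\ker\phi_A}=0\}\cong\mathrm{Hom}_\K(\LL/\ker\phi_A,\LL)$ and then counting the same image once as a $\K$-space (dimension $m\cdot\rk(\phi_A)$) and once as an $\LL$-space (dimension $\rank(\Dick_\G(A))$, hence $\K$-dimension $m\cdot\rank(\Dick_\G(A))$) gives the rank equality with essentially no computation. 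This double-counting route is arguably cleaner than the factorization $\Dick_\G(A)=\mathbf{M}_1\mathbf{M}_2$ into truncated Moore matrices that the present paper uses later (in the proof of Proposition~\ref{Dickson}) for the stronger statement about $t\times t$ minors; your argument only yields the rank equality, not the nonvanishing of all $t\times t$ minors, but that is all the proposition claims. The two points requiring care — that the rank--nullity/dimension count must consistently distinguish $\K$- from $\LL$-dimensions, and that the correspondence $\LL[\G]\cong\mathrm{End}_\K(\LL)$ of Theorem~\ref{thm:correspondence_poly_endo} intertwines $\circ$ with composition — are both addressed. No gaps.
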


A rank preserving representation of a vector of $\LL^N$ is given by its associated $\GG$-Moore matrix, analogous to the Moore/Wronskian matrix in the finite field case.

\begin{defn}[{\cite[Def.~7]{ACLN}}]
For a vector $\vv =(v_1, \ldots, v_m) \in \LL^m$, its $\GG$-Moore matrix is defined as
\[
M_{\GG}(\vv)\eqdef
\begin{pmatrix}
\gg_0(v_1) & \ldots & \gg_0(v_m)\\
\vdots &\ddots &\vdots \\
\gg_{m-1}(v_1) &\ldots &\gg_{m-1}(v_m)
\end{pmatrix}.
\]
\end{defn}

It is proved in \cite[Prop.~9]{ACLN} that $\rk_{\K}(\vv) = \rk_{\LL}(M_{\GG}(\vv))$. Abusing the notation, we will also use $M_{\GG}(\tilde{\vv})$ to denote the truncated $\GG$-Moore $m \times p$ matrix in the case $\tilde{\vv} \in {\LL}^p$ where $p<m$.

Next, we give a decomposition of a $\GG$-Dickson matrix
associated to $A \in \LL[\GG]$ into product of two truncated $\GG$-Moore matrices based on a trace representation of $A$. The representation of a linear polynomial of rank $k$ is essentially proved in \cite[Thm.~2.4]{LQ12}. We give a proof for an arbitrary finite
Galois extension for completeness. 
\begin{pro}\label{Dickson}
  Let $\LL/\K$ be a finite Galois extension of degree $m$ with Galois group $\GG$. If an element $A = \sum_{\gg\ \!\! \in \GG} a_\gg \gg \in \LL[\GG]$ has rank
  $t$, then there exist two vectors $\av = (\a_1, \ldots, \a_t )$ and $\bb = (\b_1, \ldots, \b_t ) \in \LL^t$, with $\rk_{\K}(\av) = \rk_{\K}(\bb) =t$, such that $\Dick_{\GG}(A) = M_{\GG}(\bb) M_{\GG}(\av)^{\top}$, where $M_{\GG}(\av), \, M_{\GG}(\bb)$ are the truncated $\GG$-Moore matrices of order $m \times t$.
\end{pro}

\begin{proof}
  First, we prove that there exist subsets of two $\K$-linearly independent
  elements
       $\{\a_1, \ldots, \a_t \}$, and
       $\{\b_1, \ldots, \b_t \} \subseteq \LL$ such that $A$, when regarded
       as a $\K$--endomorphism of $\LL$ satisfies:
     \begin{equation}\label{Traceform}
     A = \sum_{i=1}^{t} \a_i T_{\b_i},
     \end{equation}
     where $T_{\b_i}$ is the $\K$--homomorphism from $\LL$ to $\K$ defined as
     $T_{\b_i}(x) = \text{Tr}_{\LL/\K}(\b_ix) = \sum_{\gg \in \GG}
     \gg(\b_i x)$.
     To see this, let $(b_{t+1}, \dots, b_{m})$ be a
     $\K$--basis of $\ker(A)$ that we complete into a basis
     $(b_1, \dots, b_m)$ of $\LL$.  Let $(\beta_1, \dots , \beta_m)$
     be the dual basis of $(b_1, \dots, b_m)$ with respect to the
     bilinear form $(x,y) \mapsto \text{Tr}_{\LL/\K}(xy)$.
     Then,
     \[
       A = A(b_1)\text{Tr}_{\LL/K}(\beta_1 x) + \cdots + A(b_t)
       \text{Tr}_{\LL/\K}(\beta_t x).
     \]
     Indeed, the right hand side
     evaluates like $A$ at $b_1, \dots, b_m$.
     Finally, since $b_1, \dots, b_t$ span a complement subspace of $\ker A$,
     the elements $A(b_1), \dots, A(b_t)$ are linearly independent,
which proves \eqref{Traceform}.  

Now, note that, for any $k\in [1, m]$, $T_{\beta_k}(x) = \sum_i \gg_i(\beta_k) \gg_i (x)$. Hence,
regarded as an element of $\LL[\GG]$ it equals to $\sum_i \gg_i(\beta_k)\gg_i$. Thus, \eqref{Traceform} entails
\begin{align*}
  A = \sum_{k=1}^{t} \a_k T_{\b_k} &= \sum_{k=1}^{t} \a_k \sum_{i=0}^{m-1} \gg_i(\b_k)\gg_i\\
                                          &= \sum_{i=0}^{m-1} \sum_{k=1}^{t} \a_k \gg_i(\b_k)\gg_i,
\end{align*}
and therefore the coefficients of
$A = \sum_{i=0}^{m-1} a_i \gg_i$ satisfy
\[
  \forall\ \!  0 \leq i
  \leq m-1,\quad
  a_i = \sum_{k=1}^{t} \a_k \gg_i(\b_k).\]
According to Definition~\ref{Dicksondefn}, 
the $(i,j)$-th entry of $\Dick_\GG(A)$ is
$\gg_j(a_{\sigma_j^{-1}(i)}) = \gg_j(\sum_k \a_k
\gg_{\sigma_j^{-1}(i)}(\b_k))$
$= \sum_k \gg_j(\a_k) \gg_i(\b_k)$. Therefore,
  \begin{equation}\label{DicksonasMoore}
    \Dick_\GG(A) = \begin{pmatrix}
        \gg_0(\b_1) & \ldots & \gg_0(\b_t)\\
        \vdots & \ddots & \vdots\\
        \gg_{m-1}(\b_1) & \ldots & \gg_{m-1}(\b_t)
    \end{pmatrix}
    \begin{pmatrix}
        \gg_0(\a_1) & \ldots &\gg_{m-1}(\a_1)\\
        \vdots & \ddots &\vdots\\
        \gg_0(\a_t) &\ldots &\gg_{m-1}(\a_t)
    \end{pmatrix}.
    \end{equation}

\end{proof}
 The following corollary gives a very important property of $\GG$-Dickson matrices when $\GG$ is cyclic and will be useful for the decoding algorithms to follow. The result for the finite field case was proved in \cite[Thm.~3]{Tovo} and was used for decoding of Gabidulin codes over finite field extensions.
\begin{cor}\label{consecutive}
Let $\LL/\K$ be a cyclic Galois extension with Galois group $\GG = \< \theta\>$. If the elements of $\GG$ are ordered as $\gg_i = \theta^i$ for $i \in [0, |\GG|-1]$, then any  $t \times t$ submatrix of the $\GG$-Dickson matrix $\Dick_{\GG}(A)$ of an element $A = \sum_{\gg\ \!\! \in \GG} a_\gg \gg \in \LL[\GG]$ formed by $t$ consecutive rows and $t$ consecutive columns is invertible.
\end{cor}

\begin{proof}
Following the decomposition in \eqref{DicksonasMoore} if we write 
    $\Dick_\GG(A) = \mathbf{M}_1 \mathbf{M}_2^{\top}$, then any $t \times t$ submatrix of $\Dick_\GG(A)$ formed by $t$ consecutive columns and rows is obtained by product of a submatrix of $\mathbf{M}_1$ of $t$ consecutive rows with a submatrix of $\mathbf{M}_2^{\top}$ of $t$ consecutive columns. {It is clear that the matrices $\mathbf{M}_1$, $\mathbf{M}_2$ are truncated $\GG$-Moore matrices $ M_{\GG}(\bb), \, M_{\GG}(\av)$ respectively, where $ \bb = (\beta_1, \dots, \beta_t), \, \av = (\alpha_1, \dots, \alpha_t)\in  \LL^{t}$.}
As both $(\alpha_1, \dots, \alpha_t)$
and $(\beta_1, \dots, \beta_t)$ are linearly independent, it follows from \cite[Prop.~9]{ACLN} that $\mathbf{M}_1,\, \mathbf{M}_2$ have full rank. Now we show that any $t \times t$ submatrix of $\mathbf{M}_1$ (resp.$\mathbf{M}_2$) consist of $t$ consecutive rows are invertible. Indeed, if any $t$ consecutive rows of $\mathbf{M}_1$ are $\K$-linearly dependent, so will be the first $t$ consecutive rows due to our choice of the ordering on the elements of $\GG$.
Suppose that for some $1 < t_0 \leq t$, the $t_0$--th row is an $\LL$--linear combination of the $t_0-1$ previous ones.
Then, iteratively applying $\theta$ to the rows we deduce that any row is a linear combination of the $t_0-1$ previous ones and hence that the row
  space of $\mathbf{M}_1$ is generated by the first $t_0-1$ rows which contradicts the rank of $\mathbf{M}_1$.
Hence, it completes the proof.
\end{proof}
\begin{rem}
Whether the statement in Corollary \ref{consecutive} still holds for arbitrary Abelian group $\GG$ is still unclear. But it should be noted that cyclicity of $\GG$ is not assumed in getting the decomposition of the $\GG$-Dickson matrix into product of two truncated Moore matrices as shown in \eqref{DicksonasMoore}. Thus, whether Corollary \ref{consecutive} is true for arbitrary $\GG$-Dickson matrices or not depends on whether $t$ consecutive rows of these truncated Moore matrices defined over arbitrary Abelian groups are invertible or not. This does not seem to be true in general.
For instance, let $\gg_0 = Id, \gg_1 = \theta_1$ for the extension $\mathbb{Q}(\sqrt{2}, \sqrt{3}, \sqrt{5})/\mathbb{Q}$ of Example~\ref{runningexample}. If we take the vector $\vv = (1, \sqrt{3})$, then then first two rows of the truncated Moore matrix are linearly dependent (in fact, same) as $\theta_1$ fixes $\sqrt{3}$.
\end{rem}

 \section{Decoding using $\GG$-Dickson matrices}\label{sec:dickson_framework}

In the sequel, we always fit in the following context. Let $\LL/\K$
be a finite Galois extension with Galois group $\GG$. Suppose
$\CC \subseteq \LL[\GG]$ is a rank metric Reed--Muller code
$RM_{\TH}(r, \nn)$ with minimum rank distance $d$ and we are given
\[
  Y = C+E,
\]
where $C \in \CC$ and $E \in \LL[\GG]$ with $\rk (E) = t \le \lfloor \frac{d-1}{2}\rfloor$.

The $\GG$-Dickson matrix based decoding of rank metric Reed--Muller code $RM_{\TH}(r, \nn)$ can be seen as an instance of the problem of reconstruction of $\TH$-polynomials. Indeed, by denoting 
\[
Y = \sum_{\gg \in \GG} y_\gg \gg, \quad
C = \sum_{\gg \in \GG} c_\gg \gg \quad \text{and} \quad
E = \sum_{\gg \in \GG} e_\gg \gg,
\]
the primary observation one can make is the following.
\begin{description}
\item[\textbf{Observation. $E$ is partially known}] The element $Y$
  is known and we aim to compute the pair $(C,E)$.  Since
  $C \in \RM_{\TH}(r,\nn)$ and $Y = C+E$, then for any $\gg \in \GG$
  with $\TH$-degree $>r$ we have $c_\gg = 0$ and hence
  $y_\gg = e_\gg$. In summary: \textbf{for any $\gg$ of $\TH$--degree $>r$,
  $e_\gg$ is known}. Therefore, {$\Dick_\GG(E)$ is partially
  known}.
  \end{description}
  
 \medskip
 
We will reconstruct the error $\TH$-polynomial $E$ by recovering its unknown coefficients as follows.
\begin{description}
\item[\textbf{Main idea.  One may iteratively compute the unknown coefficients of $E$}]

The strategy is to find submatrices of the $\GG$-Dickson matrix $\Dick_\GG (E)$ that contain only one unknown entry denoted as $x$ such that the row containing $x$ can be written as a linear combination of the rest of the rows. 
\[
\begin{pmatrix}
(*) & \dots & x \\
(*) & \dots & (*)\\
    &\ddots&    \\
(*) & \dots & (*)    
\end{pmatrix}.
\]
\end{description}

\begin{rem}\label{rem:conjugate_unknown}
  As we will see in the sequel, the unknown entry $x$ is in general not
  exactly a coefficient $e_\gg$ of $E$ but its image $\hh(e_\gg)$ by some
  $\hh \in \GG$ for $\gg, \hh \in \GG$ that are determined by the indexes
  of the unknown entry (see Definition~\ref{Dicksondefn}). Hence $e_\gg$
  can then be recovered from $x$ by applying $\hh^{-1}$.
\end{rem}

Before describing the technique for obtaining a sequence of such submatrices for decoding $\TH$-Reed--Muller codes, we mention an approach for decoding Gabidulin codes over an arbitrary cyclic Galois extension, \emph{i.e.}, when {$\GG$ is cyclic.} In this case, the existence of such a submatrix is proved by finding a $(t+1)\times (t+1)$ submatrix made of consecutive rows and consecutive columns that contains only one unknown entry denoted as $x$:
\[
\begin{pmatrix}
(*) & \vdots& (*) \\
\cdots    &x& \cdots    \\
(*) & \vdots & (*)    
\end{pmatrix}.
\]
 Recall that
  $\rk (E) = \rk(\Dick_\GG (E)) = t$. Therefore, any $(t+1)\times (t+1)$ minor of $\Dick_\GG (E)$ vanishes. 
The determinant of this submatrix vanishes and
expresses as $ax + b$ where $a$ is a $t \times t$ minor of
$\Dick_\GG(E)$, which, from Proposition~\ref{Dickson} is nonzero.
Since $a, b$ do not depend on $x$, they can be computed from
known coefficients. Then, $x$ can be recovered as the unique solution
of the degree $1$ equation given by the cancellation of the
determinant of the above submatrix. For Gabidulin codes, we explain the method of identifying a sequence of $(t+1)\times (t+1)$--submatrices of $\Dick_\GG(E)$ containing only one unknown coefficient in \S \ref{Gcyclic}.

 However, for decoding rank metric Reed--Muller codes  $RM_{\TH}(r, \nn)$, the absence of the property in Corollary \ref{consecutive} required to combine the previously sketched approach with a majority voting technique to recover the unknown coefficients iteratively. 
We discuss the decoding of rank metric Reed--Muller codes in detail in \S~\ref{sec:Dickson_decoding}. This approach is inspired by, though essentially different from, the decoding method using majority voting for unknown syndromes first introduced by Feng and Rao in \cite{FR} for decoding algebraic geometry codes.

 \section{Decoding using Dickson matrices: first examples}\label{sec:first_examples}
In this section, we recall how the property of circulant Dickson matrix stated in Corollary \ref{consecutive} enables to decode Gabidulin codes. Later on, we show how a similar method can be adapted for decoding Reed-Solomon codes, the Hamming counterpart of Gabidulin codes.

\subsection{Illustration: using Dickson matrices to decode Gabidulin codes}

In what follows we show how to use Dickson matrices to decode
a Gabidulin code (\emph{i.e.}, when $\GG$ is cyclic) of dimension $k$.
Somehow, for Gabidulin codes, it consists in adapting the idea
from \cite{Tovo} by using Dickson matrices.

Here, Gabidulin codes are regarded as an $\Fqm$--subspace of the
twisted group algebra $\Fqm[\GG]$ where $\GG$ is the cyclic group of
order $m$ generated by the Frobenius automorphism $\theta$.  In our
setting, the Gabidulin code of dimension $k$ is the
following $\Fqm$--subspace of $\Fqm[\GG]$
\[
\mathcal{G}_k \eqdef {\langle \theta^i ~:~ 0 \leq i < k \rangle}_{\Fqm}.
\]
Equivalently, they correspond to $\TH$--Reed-Muller codes of degree $k-1$
in $\Fqm[\GG]$ (see Definition~\ref{theta_RM}).

\begin{rem}
  Usually in the literature, Gabidulin codes are given in vector
  representation. The conversion from a subspace of $\Fqm [\GG]$ to
  a subspace of $\Fqm^m$ is explained in Section~\ref{sec:LG-codes}.
  Due to this equivalence, our description of Gabidulin code is equivalent
  to that of usual (vector) Gabidulin codes over $\Fqm$ and of length $m$.
\end{rem}

We will show the technique works for
$t = \lfloor \frac{d-1}{2}\rfloor = \lfloor \frac{m-k}{2}\rfloor$. In
this setting, one can observe that the indexes of the involved
$(\lfloor \frac{d-1}{2}\rfloor +1)\times (\lfloor \frac{d-1}{2}\rfloor
+1)$ submatrices can be chosen independently from the error
$E$. Therefore, if the rank $t$ of $E$ turns out to be less than
$\lfloor \frac{d-1}{2}\rfloor$, then decoding remains possible by
considering $(t+1)\times (t+1)$ submatrices of the aforementioned
$(\lfloor \frac{d-1}{2}\rfloor +1)\times (\lfloor \frac{d-1}{2}\rfloor
+1)$ submatrices.  In summary, the decoding process we describe for an
error of rank $\frac{d-1}{2}$ actually easily adapts to errors of
lower ranks.  For this reason, in this section, when describing the
algorithm, we will always assume that
  \[
    t \eqdef \rk (E) = \left\lfloor\frac{d-1}{2}\right\rfloor . 
  \]
 
\subsubsection{A first example}\label{sec:Gabi_1st_example}
To begin, we illustrate the decoding method for a Gabidulin code with
$m = 7$ and $k = 3$.  Suppose the sent message is
$C = c_0 X + c_1 X^q + c_2 X^{q^2}$ and the received message is
$Y =C+ E$ where $E$ is the error polynomial with
$\rank(E) = \frac{m-k}{2} = 2$. 
    \[
    \Dick_\GG(Y) =  
\underbrace{{\scriptsize{  \begin{pmatrix}
      c_0 & 0 & 0 & 0 & 0 & c_2^{q^5} & c_1^{q^6} \\
      c_1 & c_0^q & 0 & 0 & 0 & 0 & c_2^{q^6} \\
      c_2 & c_1^q & c_0^{q^2} & 0 & 0 & 0 & 0 \\
      0 & c_2^q & c_1^{q^2} & c_0^{q^3} & 0 & 0 & 0 \\
      0 & 0 & c_2^{q^2} & c_1^{q^3} & c_0^{q^4} & 0 & 0 \\
      0 & 0 & 0 & c_2^{q^3} & c_1^{q^4} & c_0^{q^5} & 0 \\
      0 & 0 & 0 & 0 & c_2^{q^4} & c_1^{q^5} & c_0^{q^6} \\
    \end{pmatrix}
  }}}_{\Dick_\GG (C)}
+
\underbrace{{\scriptsize{  \begin{pmatrix}
      e_0 & e_6^q & e_5^{q^2} & e_4^{q^{3}} & e_3^{q^4} & e_2^{q^5} & e_1^{q^6} \\
      e_1 & e_0^q & e_6^{q^2} & e_5^{q^{3}} & e_4^{q^4} & e_3^{q^5} & e_2^{q^6} \\
      e_2 & e_1^q & e_0^{q^2} & e_6^{q^3} & e_5^{q^4} & e_4^{q^5} & e_3^{q^6} \\
      e_3 & e_2^q & e_1^{q^2} & e_0^{q^3} & e_6^{q^4} & e_5^{q^5} & e_4^{q^6} \\
      e_4 & e_3^q & e_2^{q^2} & e_1^{q^3} & e_0^{q^4} & e_6^{q^5} & e_5^{q^6} \\
      e_5 & e_4^q & e_3^{q^2} & e_2^{q^3} & e_1^{q^4} & e_0^{q^5} & e_6^{q^6} \\
      e_6 & e_5^q & e_4^{q^2} & e_3^{q^3} & e_2^{q^4} & e_1^{q^5} & e_0^{q^6} \\
    \end{pmatrix}
}}}_{\Dick_\GG(E)}.
\]
Therefore, the coefficients $e_i$ for $3 \le i \le 6$ of the error polynomial $E$ are known, as illustrated below where known entries are represented in light blue.
\[
\underbrace{{\scriptsize{\begin{pmatrix}
     e_0 & \textcolor{known}{e_6^q} & \textcolor{known}{e_5^{q^2}} & \textcolor{known}{e_4^{q^{3}}} & \textcolor{known}{e_3^{q^4}} & e_2^{q^5} & e_1^{q^6} \\
e_1 & e_0^q & \textcolor{known}{e_6^{q^2}} & \textcolor{known}{e_5^{q^{3}}} & \textcolor{known}{e_4^{q^4}} & \textcolor{known}{e_3^{q^5}} & e_2^{q^6} \\
e_2 & e_1^q & e_0^{q^2} & \textcolor{known}{e_6^{q^3}} & \textcolor{known}{e_5^{q^4}} & \textcolor{known}{e_4^{q^5}} & \textcolor{known}{e_3^{q^6}} \\
\textcolor{known}{e_3} & e_2^q & e_1^{q^2} & e_0^{q^3} & \textcolor{known}{e_6^{q^4}} & \textcolor{known}{e_5^{q^5}} & \textcolor{known}{e_4^{q^6}} \\
\textcolor{known}{e_4} & \textcolor{known}{e_3^q} & e_2^{q^2} & e_1^{q^3} & e_0^{q^4} & \textcolor{known}{e_6^{q^5}} & \textcolor{known}{e_5^{q^6}} \\
\textcolor{known}{e_5} & \textcolor{known}{e_4^q} & \textcolor{known}{e_3^{q^2}} & e_2^{q^3} & e_1^{q^4} & e_0^{q^5} & \textcolor{known}{e_6^{q^6}} \\
\textcolor{known}{e_6} & \textcolor{known}{e_5^q} & \textcolor{known}{e_4^{q^2}} & \textcolor{known}{e_3^{q^3}} & e_2^{q^4} & e_1^{q^5} & e_0^{q^6} \\
    \end{pmatrix}
}}}_{\Dick(E)}.
\]
Following the framework described in
Section~\ref{sec:dickson_framework}, we first consider a $3 \times 3$
submatrix containing a $q$--th power of $e_2$ (namely $e_2^{q^2}$) on
its top-right corner as the only unknown entry as shown in the
leftmost matrix of Figure~\ref{fig:minors_Gabidulin}. Then
$e_{2}^{q^2}$ can be recovered by solving a simple equation of degree
$1$ which permits to deduce $e_2$. The next unknown coefficient $e_1$
lies on the diagonal above the diagonal of $e_2$. Thus it is possible
to find a $3 \times 3$ submatrix containing $e_1$ (by shifting the
previously considered matrix by one row). It similarly recovers $e_1$
and we repeat the process for $e_0$ and we recover
$E$. Figure~\ref{fig:minors_Gabidulin} describes the sequence of
involved $3 \times 3$ minors.
    
    \begin{figure}[!h]
      \centering
    \begin{tikzpicture}[>=latex]
\matrix (A) [matrix of math nodes,nodes = {font=\scriptsize},
             row sep=0.6pt, column sep=0.6pt, inner sep=0.6pt,left delimiter  = (,right delimiter = )] at (0,0)
{\textcolor{unknown}{e_0} & \textcolor{known}{e_6^q} & \textcolor{known}{e_5^{q^2}} & \textcolor{known}{e_4^{q^{3}}} & \textcolor{known}{e_3^{q^4}} & \textcolor{unknown}{e_2^{q^5}} & \textcolor{unknown}{e_1^{q^6}} \\
  \textcolor{unknown}{e_1} & \textcolor{unknown}{e_0^q} & \textcolor{known}{e_6^{q^2}} & \textcolor{known}{e_5^{q^{3}}} & \textcolor{known}{e_4^{q^4}} & \textcolor{known}{e_3^{q^5}} & \textcolor{unknown}{e_2^{q^6}} \\
  \textcolor{unknown}{e_2} & \textcolor{unknown}{e_1^q} & \textcolor{unknown}{e_0^{q^2}} & \textcolor{known}{e_6^{q^3}} & \textcolor{known}{e_5^{q^4}} & \textcolor{known}{e_4^{q^5}} & \textcolor{known}{e_3^{q^6}} \\
  \textcolor{known}{e_3} & \textcolor{unknown}{e_2^q} & \textcolor{unknown}{e_1^{q^2}} & \textcolor{unknown}{e_0^{q^3}} & \textcolor{known}{e_6^{q^4}} & \textcolor{known}{e_5^{q^5}} & \textcolor{known}{e_4^{q^6}} \\
  \textcolor{known}{e_4} & \textcolor{known}{e_3^q} & \textcolor{unknown}{e_2^{q^2}} & \textcolor{unknown}{e_1^{q^3}} & \textcolor{unknown}{e_0^{q^4}} & \textcolor{known}{e_6^{q^5}} & \textcolor{known}{e_5^{q^6}} \\
  \textcolor{known}{e_5} & \textcolor{known}{e_4^q} & \textcolor{known}{e_3^{q^2}} & \textcolor{unknown}{e_2^{q^3}} & \textcolor{unknown}{e_1^{q^4}} & \textcolor{unknown}{e_0^{q^5}} & \textcolor{known}{e_6^{q^6}} \\
  \textcolor{known}{e_6} & \textcolor{known}{e_5^q} & \textcolor{known}{e_4^{q^2}} & \textcolor{known}{e_3^{q^3}} & \textcolor{unknown}{e_2^{q^4}} & \textcolor{unknown}{e_1^{q^5}} & \textcolor{unknown}{e_0^{q^6}} \\
};

\matrix (B) [matrix of math nodes,nodes = {font=\scriptsize},
             row sep=0.6pt, column sep=0.6pt, inner sep=0.6pt,left delimiter  = (,right delimiter =)] at (4,0)
{
 \textcolor{unknown}{e_0} & \textcolor{known}{e_6^q} & \textcolor{known}{e_5^{q^2}} & \textcolor{known}{e_4^{q^{3}}} & \textcolor{known}{e_3^{q^4}} & \textcolor{known}{e_2^{q^5}} & \textcolor{unknown}{e_1^{q^6}} \\
 \textcolor{unknown}{e_1} & \textcolor{unknown}{e_0^q} & \textcolor{known}{e_6^{q^2}} & \textcolor{known}{e_5^{q^{3}}} & \textcolor{known}{e_4^{q^4}} & \textcolor{known}{e_3^{q^5}} & \textcolor{known}{e_2^{q^6}} \\
 \textcolor{known}{e_2} & \textcolor{unknown}{e_1^q} & \textcolor{unknown}{e_0^{q^2}} & \textcolor{known}{e_6^{q^3}} & \textcolor{known}{e_5^{q^4}} & \textcolor{known}{e_4^{q^5}} & \textcolor{known}{e_3^{q^6}} \\
 \textcolor{known}{e_3} & \textcolor{known}{e_2^q} & \textcolor{unknown}{e_1^{q^2}} & \textcolor{unknown}{e_0^{q^3}} & \textcolor{known}{e_6^{q^4}} & \textcolor{known}{e_5^{q^5}} & \textcolor{known}{e_4^{q^6}} \\
 \textcolor{known}{e_4} & \textcolor{known}{e_3^q} & \textcolor{known}{e_2^{q^2}} & \textcolor{unknown}{e_1^{q^3}} & \textcolor{unknown}{e_0^{q^4}} & \textcolor{known}{e_6^{q^5}} & \textcolor{known}{e_5^{q^6}} \\
 \textcolor{known}{e_5} & \textcolor{known}{e_4^q} & \textcolor{known}{e_3^{q^2}} & \textcolor{known}{e_2^{q^3}} & \textcolor{unknown}{e_1^{q^4}} & \textcolor{unknown}{e_0^{q^5}} & \textcolor{known}{e_6^{q^6}} \\
 \textcolor{known}{e_6} & \textcolor{known}{e_5^q} & \textcolor{known}{e_4^{q^2}} & \textcolor{known}{e_3^{q^3}} & \textcolor{known}{e_2^{q^4}} & \textcolor{unknown}{e_1^{q^5}} & \textcolor{unknown}{e_0^{q^6}} \\
};

\matrix (C) [matrix of math nodes,nodes = {font=\scriptsize},
             row sep=0.6pt, column sep=0.6pt, inner sep=0.6pt,left delimiter  = (,right delimiter =)] at (8,0)
 {
      \textcolor{unknown}{e_0} & \textcolor{known}{e_6^q} & \textcolor{known}{e_5^{q^2}} & \textcolor{known}{e_4^{q^{3}}} & \textcolor{known}{e_3^{q^4}} & \textcolor{known}{e_2^{q^5}} & \textcolor{known}{e_1^{q^6}} \\
      \textcolor{known}{e_1} & \textcolor{unknown}{e_0^q} & \textcolor{known}{e_6^{q^2}} & \textcolor{known}{e_5^{q^{3}}} & \textcolor{known}{e_4^{q^4}} & \textcolor{known}{e_3^{q^5}} & \textcolor{known}{e_2^{q^6}} \\
      \textcolor{known}{e_2} & \textcolor{known}{e_1^q} & \textcolor{unknown}{e_0^{q^2}} & \textcolor{known}{e_6^{q^3}} & \textcolor{known}{e_5^{q^4}} & \textcolor{known}{e_4^{q^5}} & \textcolor{known}{e_3^{q^6}} \\
      \textcolor{known}{e_3} & \textcolor{known}{e_2^q} & \textcolor{known}{e_1^{q^2}} & \textcolor{unknown}{e_0^{q^3}} & \textcolor{known}{e_6^{q^4}} & \textcolor{known}{e_5^{q^5}} & \textcolor{known}{e_4^{q^6}} \\
      \textcolor{known}{e_4} & \textcolor{known}{e_3^q} & \textcolor{known}{e_2^{q^2}} & \textcolor{known}{e_1^{q^3}} & \textcolor{unknown}{e_0^{q^4}} & \textcolor{known}{e_6^{q^5}} & \textcolor{known}{e_5^{q^6}} \\
      \textcolor{known}{e_5} & \textcolor{known}{e_4^q} & \textcolor{known}{e_3^{q^2}} & \textcolor{known}{e_2^{q^3}} & \textcolor{known}{e_1^{q^4}} & \textcolor{unknown}{e_0^{q^5}} & \textcolor{known}{e_6^{q^6}} \\
      \textcolor{known}{e_6} & \textcolor{known}{e_5^q} & \textcolor{known}{e_4^{q^2}} & \textcolor{known}{e_3^{q^3}} & \textcolor{known}{e_2^{q^4}} & \textcolor{known}{e_1^{q^5}} & \textcolor{unknown}{e_0^{q^6}} \\   
};

\matrix (D) [matrix of math nodes,nodes = {font=\scriptsize},
             row sep=0.6pt, column sep=0.6pt, inner sep=0.6pt,left delimiter  = (,right delimiter =)] at (12,0)
       {
       \textcolor{known}{e_0} & \textcolor{known}{e_6^q} & \textcolor{known}{e_5^{q^2}} & \textcolor{known}{e_4^{q^{3}}} & \textcolor{known}{e_3^{q^4}} & \textcolor{known}{e_2^{q^5}} & \textcolor{known}{e_1^{q^6}} \\
       \textcolor{known}{e_1} & \textcolor{known}{e_0^q} & \textcolor{known}{e_6^{q^2}} & \textcolor{known}{e_5^{q^{3}}} & \textcolor{known}{e_4^{q^4}} & \textcolor{known}{e_3^{q^5}} & \textcolor{known}{e_2^{q^6}} \\
       \textcolor{known}{e_2} & \textcolor{known}{e_1^q} & \textcolor{known}{e_0^{q^2}} & \textcolor{known}{e_6^{q^3}} & \textcolor{known}{e_5^{q^4}} & \textcolor{known}{e_4^{q^5}} & \textcolor{known}{e_3^{q^6}} \\
       \textcolor{known}{e_3} & \textcolor{known}{e_2^q} & \textcolor{known}{e_1^{q^2}} & \textcolor{known}{e_0^{q^3}} & \textcolor{known}{e_6^{q^4}} & \textcolor{known}{e_5^{q^5}} & \textcolor{known}{e_4^{q^6}} \\
       \textcolor{known}{e_4} & \textcolor{known}{e_3^q} & \textcolor{known}{e_2^{q^2}} & \textcolor{known}{e_1^{q^3}} & \textcolor{known}{e_0^{q^4}} & \textcolor{known}{e_6^{q^5}} & \textcolor{known}{e_5^{q^6}} \\
       \textcolor{known}{e_5} & \textcolor{known}{e_4^q} & \textcolor{known}{e_3^{q^2}} & \textcolor{known}{e_2^{q^3}} & \textcolor{known}{e_1^{q^4}} & \textcolor{known}{e_0^{q^5}} & \textcolor{known}{e_6^{q^6}} \\
       \textcolor{known}{e_6} & \textcolor{known}{e_5^q} & \textcolor{known}{e_4^{q^2}} & \textcolor{known}{e_3^{q^3}} & \textcolor{known}{e_2^{q^4}} & \textcolor{known}{e_1^{q^5}} & \textcolor{known}{e_0^{q^6}} \\
      };
\draw[->] (1.8,0) -- (2.2,0);
\draw[->] (5.8,0) -- (6.2,0);
\draw[->] (9.8,0) -- (10.2,0);
\draw (-1.6,-1.6) rectangle (-.37,-.22);
\draw (2.4,-1.1) rectangle (3.6,0.22);
\draw (6.4,-.66) rectangle (7.6,0.67);
\end{tikzpicture}
\caption{The sequence of minors that permit to recover the unknown coefficients. At each step, unknown coefficients are in black font.}
\label{fig:minors_Gabidulin}
\end{figure}

\subsubsection{The general case}\label{Gcyclic}
Consider now an arbitrary Gabidulin code of length $m$ of dimension $k$. Let
$E= \sum_{i=0}^{m-1} e_i x^{q^i}$ with $e_i \in \LL$ be the error
polynomial with $\rank(E) = t = \lfloor \frac{d-1}{2} \rfloor$. The coefficients
$e_i$'s are known for $i = k, \ldots, m-1$, which appears in the
unshaded part as illustrated on the left-side of
Figure~\ref{Fig:decoding_Gabidulin}. Note that the largest square
matrix that can be drawn in that unshaded part has order
$t= \lfloor\frac{d-1}{2}\rfloor$.
\begin{figure}[h]
\centering
    \begin{tikzpicture}[scale=0.8]
    \draw[draw=black] (0,0) rectangle ++(5,5); 
    \draw[draw=black,fill=gray,opacity=0.3] (0,5) -- (0,3) -- (3,0) --(5,0) -- cycle;
    \draw[draw=black,fill=gray,opacity=0.3] (3.5,5) -- (5,5) -- (5,3.5) -- cycle;
    \draw[draw=black,fill=gray,opacity=0.05] (0,3) -- (0,0) -- (3,0)-- cycle;
    \draw[draw=black,fill=gray,opacity=0.05] (0,5) -- (3.5,5) -- (5,3.5)-- (5,0)--cycle;
     \draw[draw=black, thick, dotted] (0,0) rectangle ++(1.63,1.63);
\draw[dotted, black] (0,0) rectangle ++(1.5,1.5);
\draw[<->] (-0.3,0) -- (-0.3,1.5) node[midway, left] {$\lfloor \frac{d-1}{2}\rfloor $};

     \draw[<->] (-0.15,0) -- (-0.15,3) node at (-0.35,2.4) {$d$};
\draw[->, thick, rounded corners=10pt] 
        (4.3,0.5) to [bend right = 40] (6,2) node[above, align=center] {Unknown\\coefficients};
    \draw[->, thick, rounded corners=10pt] 
        (3,4) to [bend left = -80] (-1, 4.7) node[below, align=center] {Known \\coefficients};  
    \draw[->, thick, rounded corners=10pt] 
        (1,1.3) to [bend left = 50] (-1, 3.7) node[above, align=center] {};   
    \draw[->, thick, rounded corners=10pt] 
        (4.8,4.8) to [bend right = -30] (6,3) node[above, align=center] {}; 
    \draw (0,0) node at (0.25,4.8) {$e_0$} node at (5.1,.3) {$e_0^{q^{m-1}}$};
\draw (0,0) node at (1.85,1.6) {$e_{k-1}^{q^t}$}; 
\draw (0,0) node at (0.4,3.1) {$e_{k-1}$} node at (3.3,.3) {$e_{k-1}^{q^{m-k}}$};
     \end{tikzpicture}
     \quad
    \begin{tikzpicture}[scale=0.8]
\draw[draw=black] (0,0) rectangle ++(5,5); 
    \draw[draw=black,fill=gray,opacity=0.3] (0,5) -- (0,3) -- (3,0) --(5,0) -- cycle;
    \draw[draw=black,fill=gray,opacity=0.3] (3.5,5) -- (5,5) -- (5,3.5) -- cycle;
    \draw[draw=black,fill=gray,opacity=0.05] (0,3) -- (0,0) -- (3,0)-- cycle;
    \draw[draw=black,fill=gray,opacity=0.05] ((0,5) -- (3.5,5) -- (5,3.5)-- (5,0)--cycle;
     \draw[draw=black, thick, dotted] (0,0) rectangle ++(1.63,1.63);
     \draw[dotted, black] (0,0) rectangle ++(1.5,1.5);
     \draw[->] (1.83,1.83) -- (2.1,2.1);
     \draw[<->] (-0.3,0) -- (-0.3,1.5) node[midway, left] {$\lfloor \frac{d-1}{2}\rfloor $};
\draw[draw=black, thick, rounded corners=5pt] (0,3) -- (0,3.3) -- (3.3,0) --(3,0) -- cycle;
     \draw[->, thick, rounded corners=10pt] 
        (0.2,3) to [bend left = -80] (-0.5, 3) node[below, align=center] {$e_{k-1}^{q^*}$}; 
    \draw[draw=black, thick, rounded corners=5pt] (0,4.27) -- (0,4.57) -- (4.57,0) --(4.27,0) -- cycle;
    \draw[->, thick, rounded corners=10pt] 
        (0.2,4.27) to [bend left = -80] (-0.5, 4) node[below, align=center] {$e_{1}^{q^*}$}; 
    \draw[draw=black, thick, rounded corners=5pt] (0,4.6) -- (0,4.9) -- (4.9,0) --(4.6,0) -- cycle;
    \draw[->, thick, rounded corners=10pt] 
        (0.2,4.6) to [bend left = -80] (-0.5, 4.75) node[anchor = north, align=center] at (-0.5, 4.9) {$e_{0}^{q^*}$};
      \end{tikzpicture}
      \caption{Description of the algorithm in the general case}
      \label{Fig:decoding_Gabidulin}
\end{figure}

Now as explained in Section~\ref{sec:Gabi_1st_example}, we can find a
square submatrix of $\Dick_\GG(E)$ order $t+1$ that contains
$e_{k-1}^{q^t}$ at the top-right corner and such that all the other
entries are  known. More precisely, to recover $e_{k-1}$, we
consider the submatrix $\mathbf{D}_{I,J}$ where
$I = [k+t, k+2t]$ and $J= [1, t+1]$. Since
$\Dick_\GG(E)$ has rank $t$, then $\det(\mathbf{D}_{I,J})=0$ and this
matrix is a degree $1$ polynomial in $e_{k-1}^{q^t}$ whose leading
coefficient is a $t\times t$ minor of $\Dick_\GG(E)$ which, from
Corollary \ref{consecutive} is nonzero. Thus, $e_{k-1}^{q^t}$ can be
recovered by solving an affine equation which yields $e_{k-1}$.  Then,
iteratively shifting the submatrix $D_{I,J}$ by one column to the
right or by one row to the top and applying the same principle, we
recover the other unknown coefficients of $E$.

\subsubsection{Complexity}
Let us denote by $\mathcal{M}(\Fqm/\Fq)$ the best complexity upper bound in terms of
operations in $\Fq$ that costs a multiplication or a division in
$\Fqm$. Note that, due to \cite[Cor.~11.11]{GG13} one can take
$\mathcal{M}(\Fqm/\Fq) = \OO(m \log m) = \widetilde{\OO}(m)$.  Also, we denote by
$\omega$ the complexity exponent of linear algebra operations (
$n \times n$ matrix multiplications, matrix inversion, Gaussian
elimination, \emph{etc.}).

\begin{thm}
  The algorithm described in Section~\ref{Gcyclic} corrects up to $t = \lfloor \frac{n-k}2 \rfloor$
  errors on a Gabidulin code of dimension $k$ over $\Fqm$ in a time complexity of
  \[\OO\Big(k\mathcal{M}(\Fqm/\Fq)(t^\omega + m\log q)\Big) \quad \text{operations in }\Fq.\]
  Moreover,
  if $\log q = \Omega(t^{\omega})$, then the complexity can be turned to
  $\OO(km^2t^{\omega})$ operations in $\Fq$.
\end{thm}

\begin{proof}
  The algorithm's dominant costs consist in the computation of $k$
  consecutive determinants of $t \times t$ matrices with entries in
  $\Fqm$ and $km$ evaluations of the Frobenius map $x \mapsto x^q$
  ($m$ applications per unknown coefficient of $E$).

  The cost of computing the determinants is $\OO(k t^{\omega})$  operations in $\Fqm$
  and hence $\OO(k\mathcal{M}(\Fqm/\Fq)t^{\omega})$ operations in $\Fq$.  For the
  calculation of the Frobenius one can proceed in two different
  manners. Either we raise to the power $q$, which using fast
  exponentiation costs $\OO(\log q)$ operations in $\Fqm$ and hence
  $\OO(\mathcal{M} (\Fqm/\Fq)\log q)$ operations in $\Fq$. This leads to a complexity
  in $\OO(k\mathcal{M}(\Fqm/\Fq)(t^\omega + m\log q))$. Or, we can represent elements
  of $\Fqm$ in a normal basis over $\Fq$.  In this situation the
  Frobenius becomes a single shift on the entries and hence costs
  $\OO(m)$ operations in $\Fq$. However, when choosing such a normal
  basis, one cannot expect to use fast multiplication and should take
  $\OO(m^2)$ operations in $\Fq$ for the cost of multiplications in
  $\Fqm$.  This leads to an overall complexity in $\OO(km^2t^\omega)$
  since the cost of Gaussian eliminations $\OO(km^2 t^{\omega})$ will dominate
  the $\OO(km^2)$ to apply $km$ times the Frobenius.
  The former overall complexity turns out to be better than the
  $\OO(k\mathcal{M}(\Fqm/\Fq)(t^\omega + m\log q))$ whenever
  $\log(q) = \Omega (t^{\omega})$.
\end{proof}

\begin{rem}
The decoding algorithm based on minor cancellations of Dickson matrices illustrated above works for Gabidulin codes over arbitrary cyclic Galois extensions \cite{ALR18} exactly the same way. If $\LL/\K$ is a cyclic Galois extension of degree $m$, then the complexity of Dickson matrix-based decoding of Gabidulin codes over $\LL/\K$ of length $m$ and dimension $k$ is $\OO(k \mathcal{M}(\LL/\K)t^{\omega} + k m^3)$ operations $\K$. Indeed, the cost of computing $k$ many determinants of $t \times t$ matrices is $\OO(k \mathcal{M}(\LL/\K)t^{\omega})$ operations in $\K$ and the applying an element in the Galois group can be performed in $\OO(m^2)$ operations in $\K$. 
\end{rem}

\subsection{Decoding for Reed-Solomon codes}
We conclude this section with a side remark: this approach based on
minor cancellation can actually be used even to decode Reed--Solomon
codes. Here we restrict to cyclic Reed--Solomon codes even if the
approach may be extended to the general case.
Consider $\alpha \in \Fq^\times$ that generates the multiplicative
group of $\Fq$. Set $n \eqdef q-1$ and define
\[
    \mathbf{RS}_k \eqdef \{(f(1), f(\alpha), f(\alpha^2), \dots, f(\alpha^{q-2})) ~:~ f \in \Fq[X],\ \deg (f) < k\} \subseteq \Fq^n.
  \]
 Denote by $\wt(\cdot)$ the Hamming weight and suppose we are given,
  \begin{equation}\label{eq:decoding_RS}
    \yv = \cv + \ev,\quad \text{where}\quad \cv \in \mathbf{RS}_k\quad
    \text{and}\quad \wt(\ev) = \frac{n-k}{2}\cdot
  \end{equation}
  The Chinese
  Remainder Theorem induces an isomorphism
\begin{equation}\label{eq:ev}
   \textrm{ev} :  \map{\Fq[X]/(X^n-1)}{\Fq^n}{f}{(f(1),f(\alpha), \dots, f(\alpha^{q-2})).}
 \end{equation}
 When dealing with the decoding of Reed--Solomon codes,
 elements are represented as vectors in $\Fq^n$ and the isomorphism
 \eqref{eq:ev} above is explicit in the two directions : multiple
 evaluation in the direct sense and Lagrange interpolation in
 the converse direction.
Therefore, the decoding problem for Reed--Solomon codes can be
reformulated in a constructive manner as follows. Given $y(X) \in \Fq[X]/(X^n-1)$, find
$c,e \in \Fq[X]$ satisfying
\begin{equation}\label{eq:decoding_RS_poly}
  y(X) \equiv c(X) + e(X) \mod (X^n-1), \quad \text{such\ that}\quad
  \deg c < k,\quad \text{and}\quad \wt (\text{ev}(e)) = \frac{n-k}{2}\cdot
\end{equation}

Until the end of this section, we denote by $y(X), c(X), e(X)$ the
elements of $\Fq[X]/(X^n-1)$ that evaluate respectively to $\yv, \cv, \ev$
via the isomorphism $\textrm{ev}$ of \eqref{eq:ev}.

The similarity with rank metric codes lies in the fact that elements
of $\Fq[X]/(X^n-1)$ can be associated to a circulant matrix.  More
precisely, there is a ring isomorphism between  $\Fq[X]/{(X^n -1 ) }$ and the
ring of $n\times n$ circulant matrices over $\Fq$ given by
\begin{equation}\label{RS}
\begin{aligned}
  \mathbf{Mat} : \sum_{i =0}^{n-1} c_i X^i \longmapsto
  \begin{pmatrix}
    c_0 & c_{n-1} &\ldots &c_1\\
    c_1 & c_0 &\ldots &c_2\\
    \vdots &\ddots &\ddots &\vdots\\
    c_{n-1} &c_{n-2}&\ldots &c_0
\end{pmatrix}.
\end{aligned}
\end{equation}
Similarly to Dickson matrices, $\mathbf{Mat} (c)$ represents the
multiplication by $c$ map in the monomial basis of $\Fq[X]/(X^n
-1)$. Moreover, this isomorphism has the following metric property.

\begin{pro}\label{Circulant}
  Let $n = q-1$ and $P = \sum_{i=0}^{n-1} p_i X^i  \in \Fq[X]$.
  Then,
  \[
    \wt(\textrm{ev} (P)) = \rk (\mathbf{Mat}(P)).
  \]
  Moreover, any consecutive $\rk (\mathbf{Mat}(P))$ columns of
  $\mathbf{Mat}(P)$ are $\Fq$-linearly independent.
\end{pro}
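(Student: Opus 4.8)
The plan for the first identity is to use that $n=q-1$ makes $X^{n}-1=\prod_{a\in\Fq^{\times}}(X-a)$ a product of $n$ \emph{pairwise distinct} linear factors over $\Fq$, whose roots are exactly the evaluation points $1,\alpha,\dots,\alpha^{q-2}$. By the Chinese Remainder Theorem, $\Fq[X]/(X^{n}-1)\cong\Fq^{n}$ through $\textrm{ev}$, the very isomorphism recalled before the statement. Since $\mathbf{Mat}(P)$ is the matrix of multiplication by $P$ in the monomial basis, transporting it through $\textrm{ev}$ turns it into the diagonal operator $\mathrm{diag}(P(1),\dots,P(\alpha^{q-2}))$; a diagonal matrix has rank equal to the number of its nonzero entries, so $\rk(\mathbf{Mat}(P))=\#\{j:P(\alpha^{j})\neq0\}=\wt(\textrm{ev}(P))$, as claimed.

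For the \emph{moreover} part I would first recast columns as polynomials: the $j$-th column of $\mathbf{Mat}(P)$ is the coordinate vector of $X^{j}P \bmod (X^{n}-1)$, so a block of $s$ consecutive columns starting at $j$ is the family $X^{j}P,\dots,X^{j+s-1}P$. A dependence among them means $X^{j}PQ\equiv0\pmod{X^{n}-1}$ for some nonzero $Q=\sum_{l=0}^{s-1}\lambda_{l}X^{l}$ of degree $<s$; as $X^{j}$ is a unit in $\Fq[X]/(X^{n}-1)$, this is equivalent to $PQ\equiv0\pmod{X^{n}-1}$, a condition \emph{independent of the starting index} $j$. Thus the whole question collapses to: for which $s$ does $P$ possess a nonzero annihilator of degree $<s$?

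The decisive computation is then to describe the annihilator ideal of $P$ in $\Fq[X]/(X^{n}-1)$. Setting $g\eqdef (X^{n}-1)/\gcd(P,X^{n}-1)=\prod_{a:\,P(a)\neq0}(X-a)$, one checks that $PQ\equiv0\pmod{X^{n}-1}$ holds iff $g\mid Q$; hence the nonzero annihilators are exactly the multiples of $g$, and the least degree one can reach is $\deg g=\#\{a\in\Fq^{\times}:P(a)\neq0\}=\wt(\textrm{ev}(P))=\rk(\mathbf{Mat}(P))$. Any block of that many consecutive columns is therefore free.

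I expect this last step to be the main obstacle, precisely because it pins the threshold to the \emph{non-roots} of $P$: the $\gcd$ computation yields annihilator degree $\rk(\mathbf{Mat}(P))$, so it directly guarantees independence of any $\rk(\mathbf{Mat}(P))$ consecutive columns, whereas the statement is phrased with $n-\rk(\mathbf{Mat}(P))$. Matching the two is exactly the inequality $n-\rk(\mathbf{Mat}(P))\le\rk(\mathbf{Mat}(P))$, i.e. $\wt(\textrm{ev}(P))\ge n/2$; I would therefore either restrict to this regime or, as is enough for the decoding application, record the sharp conclusion that any $\rk(\mathbf{Mat}(P))$ consecutive columns of $\mathbf{Mat}(P)$ are $\Fq$-linearly independent.
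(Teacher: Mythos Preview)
Your argument is correct and takes a genuinely different route from the paper. The paper diagonalises $\mathbf{Mat}(P)$ explicitly: it left-multiplies by the Vandermonde matrix $V=(\alpha^{ij})_{i,j}$ to obtain a matrix $\mathbf{A}$ whose $i$-th row is $P(\alpha^{i})\cdot(1,\alpha^{i},\dots,\alpha^{i(n-1)})$, reads off $\rk(\mathbf{A})=\wt(\textrm{ev}(P))$ from the count of nonzero rows, and for the second part observes that the $s\times s$ submatrix of $\mathbf{A}$ on the nonzero rows and any $s$ consecutive columns is a nonzero scalar times a Vandermonde determinant, hence invertible. Your approach instead stays inside $\Fq[X]/(X^{n}-1)$: the first identity via CRT/diagonalisation of the multiplication operator, the second via the annihilator-ideal computation $\mathrm{Ann}(P)=(g)$ with $g=(X^{n}-1)/\gcd(P,X^{n}-1)$. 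This is arguably cleaner, since it explains structurally why the threshold is exactly $\rk(\mathbf{Mat}(P))$ and why the starting index $j$ of the block plays no role.

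You are also right about the discrepancy you flag at the end: what both your argument and the paper's own proof actually establish is that any $\rk(\mathbf{Mat}(P))$ \emph{consecutive} columns are independent, not $n-\rk(\mathbf{Mat}(P))$. The ``$n-$'' in the printed statement is a slip; it is visibly false as soon as $\wt(\textrm{ev}(P))<n/2$, since a rank-$r$ matrix cannot have more than $r$ independent columns. The version with $\rk(\mathbf{Mat}(P))$ is precisely what the downstream decoding argument needs (nonvanishing of the $t\times t$ cofactor in a $(t+1)\times(t+1)$ minor), so your sharp conclusion is the right one to record.
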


\begin{proof} Consider the matrix
  \[\setlength\arraycolsep{3pt}
    \mathbf{A} \eqdef \begin{pmatrix}
        1 &1 &\ldots &1\\
        1 & \alpha & \ldots & \alpha^{n-1}\\
        \vdots &\vdots  &\ddots &\vdots\\
        1 &\alpha^{n-1} &\ldots &\alpha^{(n-1)^2}
    \end{pmatrix}
    \underbrace{\begin{pmatrix}
         p_0 & p_{n-1} &\ldots &p_1\\
    p_1 & p_0 &\ldots &p_2\\
    \vdots &\ddots &\ddots &\vdots\\
    p_{n-1} &p_{n-2}&\ldots &p_0
  \end{pmatrix}}_{\mathbf{Mat}(P)} = \begin{pmatrix}
    P(1) & P(1) & \ldots & P(1)\\
    P(\alpha) & \alpha P(\alpha) & \ldots & \alpha^{n-1}P(\alpha)\\
    \vdots &\vdots &\ddots &\vdots\\
    P(\alpha^{n-1}) & \alpha^{n-1} P(\alpha^{n-1}) &\ldots &\alpha^{(n-1)^2}
    P(\alpha^{n-1})
    \end{pmatrix}.
  \]
  First, since the left-hand term of the product defining
  $\mathbf{A}$ is a nonsingular Vandermonde matrix, then
  $\rk(\mathbf{A}) = \rk (\mathbf{Mat}(P))$. Second, by the very
  definition of $\wt(\text{ev}(P))$ we obtain that exactly
  $\wt(\text{ev}(P))$, rows of $\mathbf{A}$ are nonzero.
  Set $s \eqdef \wt(\text{ev}(P))$ and
  $I = \{i_0, \dots , i_{s-1}\} \subseteq [0, \dots, n-1]$ be the
  indexes of nonzero rows of $\mathbf{A}$ and
  $J = [a, a+s-1] \subseteq [0, n-1]$ be a subset of
  consecutive elements. Then,
  \[\det \mathbf{A}_{I,J} =  P(\alpha^{i_0})\cdots P(\alpha^{i_{s-1}}) \alpha^{a(i_0+\cdots + i_{s-1})} \begin{vmatrix}
1 & \alpha^{i_0} & \alpha^{2i_0} &  \cdots &\alpha^{(s-1)i_0}\\
\vdots & \vdots & \vdots & & \vdots \\
1 & \alpha^{i_{s-1}} & \alpha^{2i_{s-1}} & \cdots &\alpha^{(s-1)i_{s-1}}
\end{vmatrix} \neq 0.\] In summary, $\mathbf{A}$ has exactly $s$
nonzero rows and a nonzero $s \times s$ minor. Thus, $\mathbf{A}$ has
rank $s = \wt (\text{ev}(P))$ and so has $\mathbf{Mat}(P)$.  Moreover,
the fact that $\det(\mathbf{A}_{I,J})$ does not vanish for any set $J$
of consecutive columns entails that the corresponding columns of
$\mathbf{Mat}(P)$ are linearly independent.
\end{proof}

Therefore, one can decode Reed--Solomon codes using the previous
statement in the very same way as for Gabidulin codes. The decoding
problem (\ref{eq:decoding_RS}) reformulated in terms of polynomials
(\ref{eq:decoding_RS_poly}) can be expressed in terms of a rank
problem:
\[
  \mathbf{Mat}(y) = \mathbf{Mat}(c) + \mathbf{Mat}(e), \quad
  \text{where}\quad \deg(c) < k,\quad \text{and}\quad
  \rk (\mathbf{Mat}(e)) = t = \frac{n-k}{2}\cdot
\]
Since
$\deg (c) < k$, we deduce the top coefficients $e_k, \dots, e_{n-1}$
of $e$ which are nothing but those of $y$. Next, we can compute the unknown
entries of $\mathbf{Mat}(e)$ by iteratively solving degree $1$
equations corresponding to $(t+1)\times (t+1)$ minors cancellations.
Details are left to the reader.

 \section{{Decoding \texorpdfstring{$\TH$}{}--Reed-Muller codes}}\label{sec:Dickson_decoding}
In this section, we present a decoding algorithm for $\TH$-Reed-Muller
codes based on $\GG$-Dickson matrices that corrects any error up to
rank equal to half the minimum distance answering an open question in
\cite{ACLN}. Throughout this section, we consider an arbitrary but
fixed $\TH$-Reed-Muller code $\RM_{\TH}(r, \nn)$ of order $r$, and
type $\nn = (n_1, \ldots, n_m)$, where $r$ and $m$ are positive
integers such that
\[
  \nn \in \NN^m,\quad \text{with}\quad n_1 \ge n_2 \ge \cdots \ge n_m \ge 2
  \quad \text{and}\quad r \le \sum_{i=1}^m (n_i -1).
\]
We follow the notations declared in Section~\ref{sec:Rank_RM} and in
particular in Definition~\ref{thpoly}. Additionally, we set
\[
  N \eqdef \prod_{i=1}^m n_i.
\]
Recall that, according to Theorem~\ref{min}, writing
$r = \ell + \sum_{i=s+1}^m (n_i-1)$ for uniquely defined integers $\ell, s$, then the code
$\RM_{\TH}(r, \nn)$ has minimum distance
\[
  d = (n_s-\ell) \prod_{i=1}^{s-1} n_i \quad
  \text{and we fix}\quad t \le \left\lfloor \frac{d-1}{2}\right\rfloor.
\]
Finally, we denote by $k$ the dimension of $\RM_{\TH}(r,m)$.

Let us recall our decoding problem in the framework of $\LL[\GG]$-codes \cite{ACLN}.

\begin{pb}
Given $Y \in \LL[\GG]$ such that $Y = C+E$ for some
$C \in \RM_{\boldsymbol{\th}}(r,\textbf{n})$ and $E \in \LL[\GG]$ with $\rank(E) = t \le \lfloor \frac{d-1}{2} \rfloor$, recover the pair $(C,E)$.
\end{pb}

Our decoding procedure consists of iterative recovery of unknown coefficients of the error $\TH$--polynomial $E$ by a majority voting method applied on the $\GG$-Dickson matrix of $E$. For this, one key component will be the shape of the $\GG$-Dickson matrix, or more precisely the positions of the unknown coefficients in the matrix, which we describe next.

\subsection{The shape of a \texorpdfstring{$\GG$}{G}-Dickson matrix}
Let
$\GG = {\mathbb{Z}}/ {n_1 \mathbb{Z}} \times \cdots \times
{\mathbb{Z}}/{n_m \mathbb{Z}}$ and $\TH = (\th_1, \ldots, \th_m)$ be a
set of generators of $\GG$.  The set $\Lambda(\nn)$ introduced in
\eqref{eq:LambdaTheta} is ordered with the reverse lexicographic
ordering as follows.  For
$\ii = (i_1, \ldots, i_m),~ \jj = (j_1, \ldots, j_m) \in
\Lambda(\nn)$,
\[
  \ii \preceq_{revlex} \jj \quad
  \text{iff\ for\ some\ } s \in [1,m],\quad
  i_s < j_s \quad \text{and}\quad \forall t > s,\ i_t = j_t.
\]
For brevity, we will omit the $revlex$ subscript from
now on and only denote it as $\preceq$. Moreover, if $\ii \preceq \jj$
and $\ii \neq \jj$, then we simply write $\ii \prec \jj$.

Since any element $\gg \in \GG$ has a unique representative
$\ii \in \Lambda(\nn)$ such that
$\gg = \TH^{\ii} = \th_1^{i_1} \cdots \th_m^{i_m}$, the reverse
lexicographic order $\preceq$ induces a total order on $\GG$ that
we also denote by $\preceq$.
Therefore, we will regularly denote the elements of $\GG$ as $\gg_i$
for $i \in [0,N-1]$ ordered with respect to $\preceq$.
This can be made explicit as
follows: we denote $\TH^{\ii}$ by $\gg_{\varphi(\ii)}$ where $\varphi$
is the following bijection.
\begin{equation}\label{phi}
\begin{aligned}
  \varphi \colon \map{\Lambda(\nn)}{[0, N -1]}{(a_1,\ldots,
    a_m)}{a_1 + a_2 n_1 + a_3 n_1 n_2 + \cdots + a_m n_1 \cdots n_{m-1}.}
    \end{aligned}
    \end{equation}

    \begin{rem}\label{rem:increasing}
      The bijections $\varphi$ of \eqref{phi} and $\pinv$ are
      both strictly increasing w.r.t the orderings $\preceq_{revlex}$
      and $\le$.
    \end{rem}
    
Throughout this section, we express a
    $\boldsymbol\theta$-polynomial
    $F = \sum_{\ii \in \Lambda(\nn)} f_{\ii} \TH^{\ii} \in \LL[\GG]$ as
\begin{equation}
  F = \sum_{t=0}^{N-1} f_{t} \gg_t, \quad\text{where}\quad
  \pinv(t)=\ii \quad\text{and}\quad \gg_t = \TH^{\ii}.
\end{equation}
\begin{exa}
  Let
  $\GG = \mathbb{Z} / {3\mathbb{Z}} \times {\mathbb{Z} / {3\mathbb{Z}}}
  = \< \theta_1, \theta_2\>$ and the elements of $\GG$ with respect to
  the reverse lexicographic ordering are as follows:
\[
    \begin{array}{lll}\gg_0 = \theta_1^0 \,\theta_2^0,  &\gg_1 = \theta_1^1 \,\theta_2^0, &  \gg_2 = \theta_1^2 \,\theta_2^0,\\
                \gg_3 = \theta_1^0\, \theta_2^1, &\gg_4= \theta_1^1\, \theta_2^1,& \gg_5 = \theta_1^2 \,\theta_2^1, \\
                \gg_6 = \theta_1^0\, \theta_2^2,  &\gg_7 =
                  \theta_1^1 \,\theta_2^2, & \gg_8 = \theta_1^2
                  \,\theta_2^2.
    \end{array}
  \]
To make the coefficients more explicit, we write a $\TH$-polynomial in
 $\LL[G]$ as
 $F = \sum_{i,j=0}^{2} f_i^j \theta_1^{i} \theta_2^{j}$ and its
 $\GG$-Dickson matrix takes the form:
\begin{table}[h]
    \centering
{
\begin{tabular}{ccc|ccc|ccc}
        \yell $ f_0^0$& \rd $\gg_1(f_2^0)$ & \orn $\gg_2(f_1^0)$  & \y $\gg_3(f_0^2) $ & \y $ \gg_4(f_2^2)$ &\y $\gg_5(f_1^2)$  & \gr $\gg_6(f_0^1)$ &\gr  $\gg_7(f_2^1) $& \gr $\gg_8(f_1^1)$ \\
         \orn $f_1^0$ &\yell  $\gg_1(f_0^0)$ & \rd $\gg_2(f_2^0)$  &\y $\gg_3(f_1^2) $& \y $\gg_4(f_0^2)$ &\y $\gg_5(f_2^2)$  & \gr $\gg_6(f_1^1)$ &\gr $\gg_7(f_0^1)$ & \gr$\gg_8(f_2^1)$ \\
        \rd $f_2^0$ & \orn $\gg_1(f_1^0)$& \yell $\gg_2(f_0^0)$  &\y $\gg_3(f_2^2)$ & \y $\gg_4(f_1^2)$ &\y $\gg_5(f_0^2)$  & \gr $\gg_6(f_2^1)$ &\gr $\gg_7(f_1^1)$ & \gr $\gg_8(f_0^1)$\\
        \hline 
        \gr $f_0^1$ & \gr $\gg_1(f_2^1)$& \gr $\gg_2(f_1^1)$  &\yell $\gg_3(f_0^0)$ & \rd $\gg_4(f_2^0)$ &\orn $\gg_5(f_1^0)$  &\y $ \gg_6(f_0^2)$ & \y $\gg_7(f_2^2)$ & \y $\gg_8(f_1^2)$\\
        \gr $f_1^1$ &\gr  $\gg_1(f_0^1)$ & \gr $\gg_2(f_2^1)$ &\orn $\gg_3(f_1^0)$ & \yell $\gg_4(f_0^0)$ &\rd $\gg_5(f_2^0)$  &\y $\gg_6(f_1^2)$ & \y $\gg_7(f_0^2)$ &\y $\gg_8(f_2^2)$ \\
        \gr $f_2^1$ &\gr  $\gg_1(f_1^1)$ & \gr $\gg_2(f_0^1)$  &\rd $\gg_3(f_2^0)$ & \orn $\gg_4(f_1^0)$ &\yell $\gg_5(f_0^0)$  &\y $\gg_6(f_2^2) $& \y$ \gg_7(f_1^2)$ & \y$\gg_8(f_0^2)$\\
         \hline 
        \y $f_0^2$ &\y $\gg_1(f_2^2)$ & \y$\gg_2(f_1^2)$  &\gr $\gg_3(f_0^1)$ & \gr $\gg_4(f_2^1)$ & \gr $\gg_5(f_1^1)$  & \yell $\gg_6(f_0^0)$ & \rd $\gg_7(f_2^0)$ & \orn $\gg_8(f_1^0)$ \\
        \y $f_1^2$ & \y $\gg_1(f_0^2)$ & \y $\gg_2(f_2^2)$  & \gr $\gg_3(f_1^1)$ & \gr $\gg_4(f_0^1)$ &\gr  $\gg_5(f_2^1)$  &  \orn $\gg_6(f_1^0)$ &  \yell$\gg_7(f_0^0)$ & \rd $\gg_8(f_2^0)$\\
        \y $f_2^2$ &\y  $\gg_1(f_1^2)$ & \y $\gg_2(f_0^2)$  & \gr $\gg_3(f_2^1)$ & \gr $\gg_4(f_1^1)$ & \gr $\gg_5(f_0^1)$  & \rd $\gg_6(f_2^0)$& \orn $\gg_7(f_1^0)$ &  \yell $\gg_8(f_0^0)$
\end{tabular}
} \captionof{figure}{$\GG$-Dickson matrix of
  $F = \sum_{i,j=0}^{2} f_i^j \th_1^{i} \th^{j}$.  Compared to the
  $\GG$-Dickson matrix for $\GG$ cyclic, which is $q$-circulant, for
  $\GG \cong \ZZ/{n\ZZ} \times \ZZ/{n\ZZ}$, it is a block circulant matrix. The colors depict that the $9 \times 9$ matrix can be seen as a $3 \times 3$ block matrix where each block is a sort of circulant matrix and the blocks appear in a circular manner (up to applications of elements of the Galois group). If we ignore the applications of the Galois group elements, then the coefficients of $F$ are exactly in a block circulant form as defined in \cite[\S 4.0]{Tr73}.
}\label{table_blockcirculent}
\end{table}
 \end{exa}

\subsection{Known coefficients of $E$}\label{subsec:known_coefficients}
Knowing $Y = C+E$ with $C \in \RM_{\TH}(r,m)$, then, by definition of the
code, for any $\gg \in \GG$ with $\deg_{\TH}(\gg)>r$, $C_{\gg} = 0$ and hence
$E_{\gg} = Y_{\gg}$. Thus, for any $\gg \in \GG$ with $\deg_{\TH}(\gg) > r$,
the coefficient $E_{\gg}$ is known.

Consequently, we have a partial knowledge of the entries of the
$\GG$-Dickson matrix $\Dm_{\GG}(E)$ of the error. Moreover, from
Proposition~\ref{prop:Dickson_rank}, we know that the rank of
$\Dm_{\GG}(E)$ is bounded from above by
$t \leq \lfloor \frac{d-1}{2}\rfloor$.  The principle of our decoding
algorithm is to recover the unknown entries of $\Dm_{\GG}(E)$ by a
majority voting process thanks to two main properties: first, it is a
$\GG$-Dickson matrix and hence many of its entries are conjugate under the
action of $\GG$; second, its rank is bounded by $t$.

\subsection{The unknown coefficients along the diagonals of $\Dick_{\GG} (E)$}
Since, from Remark~\ref{rem:increasing}, the elements of $\GG$ are in
increasing bijection with elements of $\Lambda(\nn)$, we transport the
$\TH$--degree on $\Lambda (\nn)$ by denoting
\begin{equation}\label{eq:degree}
  \text{for}\ \  \ii = (i_1,\dots, i_m) \in \Lambda(\nn),\quad
  |\ii| \eqdef i_1+\cdots + i_m.
\end{equation}

Our algorithm recovers the unknown coefficients of $E$ using a majority voting method. First, we locate the unknown coefficients on $\Dick_{\GG }(E)$ and for that we consider the following notion of
\emph{diagonals} of a matrix.

\begin{defn}\label{diagonal}
  Let $i \in [0,N-1]$. Then the \emph{$i$-th diagonal} of the $N \times N$ board is the set
  \[\Delta_i \eqdef \{(i  + s, s) \colon s\in [0,N-1] \text{ and } i+s \le N-1 \}.\]
\end{defn}

A visual representation of the five diagonals of the $5 \times 5$ board, with different colors is given below.

\begin{center}
\tiny{
\begin{tikzpicture}[scale=0.5]
\draw[thick] (0,0) grid (5,5);
\draw[fill=LightSkyBlue1,opacity=1]  (0,4) -- (0,5) -- (1,5) -- (1,4) -- cycle;
\draw[fill=LightSkyBlue1]  (1,3) -- (1,4) --  (2,4) -- (2,3) --  cycle;
\draw[fill=LightSkyBlue1]  (2,3) -- (3,3) -- (3,2) -- (2,2) -- cycle;
\draw[fill=LightSkyBlue1]  (3,2) -- (4,2) -- (4,1) -- (3,1) -- cycle;
\draw[fill=LightSkyBlue1]  (4,1) -- (5,1) -- (5,0) -- (4,0) -- cycle;

\draw[fill=DarkSeaGreen2]  (0,3) -- (0,4) -- (1,4) -- (1,3) -- cycle;
\draw[fill=DarkSeaGreen2]  (1,2) -- (1,3) --  (2,3) -- (2,2) --  cycle;
\draw[fill=DarkSeaGreen2]  (2,2) -- (3,2) -- (3,1) -- (2,1) -- cycle;
\draw[fill=DarkSeaGreen2]  (3,1) -- (4,1) -- (4,0) -- (3,0) -- cycle;

\draw[fill=Azure2]  (0,2) -- (0,3) -- (1,3) -- (1,2) -- cycle;
\draw[fill=Azure2]  (1,1) -- (1,2) --  (2,2) -- (2,1) --  cycle;
\draw[fill=Azure2]  (2,1) -- (3,1) -- (3,0) -- (2,0) -- cycle;

\draw[fill=Thistle2]  (0,1) -- (0,2) -- (1,2) -- (1,1) -- cycle;
\draw[fill=Thistle2]  (1,0) -- (1,1) --  (2,1) -- (2,0) --  cycle;

\draw[fill=LavenderBlush1]  (0,0) -- (0,1) -- (1,1) -- (1,0) -- cycle;

\end{tikzpicture}}
\end{center}

We iteratively recover the unknown coefficients of the error
polynomial $E$ in a decreasing order according to the reverse
lexicographic ordering. This means that at each iteration,
we search the unknown coefficient $e_s$ of $E$ with the largest
possible index $s \in [0, N-1]$. This coefficient will be referred
to as $\efar$ and its index is updated at each iteration.

Note that, as already mentioned in Remark \ref{rem:conjugate_unknown},
we actually recover $\gg_k(e_i)$ for some known $\gg_k \in \GG$ which
is equivalent to recovering $e_i$ since $\gg_k$ is an automorphism.

\begin{exa}
  Let $\textbf{n} = (3,3)$ and $r = 1$. Then minimum rank distance
  $d = 6$. When starting the decoding process, the furthest unknown
  coefficient is $f_0^{1}$. To highlight the positions of the unknown
  coefficients $f_0^1$, $f_1^0$ and $f_0^0$, we put them in black font
  and the known ones in color.
  
  \begin{figure}[!ht]
      \centering
\begin{tikzpicture}[>=latex]
\matrix (A) [matrix of math nodes,nodes = {font=\scriptsize},
             row sep=1pt, column sep=1pt, inner sep=1pt,left delimiter  = (,right delimiter = )] at (0,0)
{\textcolor{unknown}{f_0^0} &  \textcolor{known}{\gg_1(f_2^0)} &  \textcolor{unknown}{\gg_2(f_1^0)}  &  \textcolor{known}{\gg_3(f_0^2)}  &   \textcolor{known}{\gg_4(f_2^2)} & \textcolor{known}{\gg_5(f_1^2)}  &  \textcolor{unknown}{\gg_6(f_0^1)} &  \textcolor{known}{\gg_7(f_2^1)} &  \textcolor{known}{\gg_8(f_1^1)} \\        
     \textcolor{unknown}{f_1^0} &  \textcolor{unknown}{\gg_1(f_0^0)} &  \textcolor{known}{\gg_2(f_2^0)}  & \textcolor{known}{\gg_3(f_1^2)} &  \textcolor{known}{\gg_4(f_0^2)} & \textcolor{known}{\gg_5(f_2^2)}  &  \textcolor{known}{\gg_6(f_1^1)} & \textcolor{unknown}{\gg_7(f_0^1)} &  \textcolor{known}{\gg_8(f_2^1)}\\
     \textcolor{known}{f_2^0} &  \textcolor{unknown}{\gg_1(f_1^0)} &  \textcolor{unknown}{\gg_2(f_0^0)} & \textcolor{known}{\gg_3(f_2^2)} &  \textcolor{known}{\gg_4(f_1^2)} & \textcolor{known}{\gg_5(f_0^2)}  &  \textcolor{known}{\gg_6(f_2^1)} & \textcolor{known}{\gg_7(f_1^1)} &  \textcolor{unknown}{\gg_8(f_0^1)}\\ 
     \textcolor{unknown}{f_0^1} &  \textcolor{known}{\gg_1(f_2^1)}&   \textcolor{known}{\gg_2(f_1^1)}  & \textcolor{unknown}{\gg_3(f_0^0)} &  \textcolor{known}{\gg_4(f_2^0)} & \textcolor{unknown}{\gg_5(f_1^0)}  &  \textcolor{known}{\gg_6(f_0^2)} &  \textcolor{known}{\gg_7(f_2^2)} &  \textcolor{known}{\gg_8(f_1^2)}\\
     \textcolor{known}{f_1^1} &  \textcolor{unknown}{\gg_1(f_0^1)} &  \textcolor{known}{\gg_2(f_2^1)} & \textcolor{unknown}{\gg_3(f_1^0)} &  \textcolor{unknown}{\gg_4(f_0^0)} & \textcolor{known}{\gg_5(f_2^0)}  & \textcolor{known}{\gg_6(f_1^2)} &  \textcolor{known}{\gg_7(f_0^2)} & \textcolor{known}{\gg_8(f_2^2)} \\
     \textcolor{known}{f_2^1} &  \textcolor{known}{\gg_1(f_1^1)} &  \textcolor{unknown}{\gg_2(f_0^1)}  & \textcolor{known}{\gg_3(f_2^0)} &  \textcolor{unknown}{\gg_4(f_1^0)} & \textcolor{unknown}{\gg_5(f_0^0)}  & \textcolor{known}{\gg_6(f_2^2)} &  \textcolor{known}{\gg_7(f_1^2)} & \textcolor{known}{\gg_8(f_0^2)}\\
     \textcolor{known}{f_0^2} & \textcolor{known}{\gg_1(f_2^2)} &  \textcolor{known}{\gg_2(f_1^2)}  & \textcolor{unknown}{\gg_3(f_0^1)} &  \textcolor{known}{\gg_4(f_2^1)} &  \textcolor{known}{\gg_5(f_1^1)}  &  \textcolor{unknown}{\gg_6(f_0^0)} &  \textcolor{known}{\gg_7(f_2^0)} &  \textcolor{unknown}{\gg_8(f_1^0)} \\
     \textcolor{known}{f_1^2} &  \textcolor{known}{\gg_1(f_0^2)} &  \textcolor{known}{\gg_2(f_2^2)}  &  \textcolor{known}{\gg_3(f_1^1)} &  \textcolor{unknown}{\gg_4(f_0^1)} &  \textcolor{known}{\gg_5(f_2^1)}  &   \textcolor{unknown}{\gg_6(f_1^0)} &   \textcolor{unknown}{\gg_7(f_0^0)} &  \textcolor{known}{\gg_8(f_2^0)}\\
     \textcolor{known}{f_2^2} &  \textcolor{known}{\gg_1(f_1^2)} &  \textcolor{known}{\gg_2(f_0^2)}  &  \textcolor{known}{\gg_3(f_2^1)} &  \textcolor{known}{\gg_4(f_1^1)} &  \textcolor{unknown}{\gg_5(f_0^1)}  &  \textcolor{known}{\gg_6(f_2^0)} &  \textcolor{unknown}{\gg_7(f_1^0)} &   \textcolor{unknown}{\gg_8(f_0^0)} \\
};
\end{tikzpicture}
\caption{Illustration of positions of the unknown coefficients.}
\label{fig:minors_RM}
\end{figure}

\end{exa}

\begin{rem}
  Note that $\Delta_0$ contains $f_0^0$ and its conjugates. Similarly,
  $\Delta_1$ and $\Delta_3$ contain $f_1^0$ and $f_0^1$, respectively
  with their respective conjugates. Furthermore, we observe that the
  number of occurrences of an unknown coefficient and its conjugates
  on the respective diagonal is at least $d$, \emph{i.e.}, the minimum
  rank distance of the code. We will show in the sequel that this
  happens in general too.
\end{rem}

We first describe the unknown coefficients, their occurrences (or occurrences of their conjugates) along the diagonals of $\Dick_{\GG} (E)$ which will be used for the decoding. For this sake, we frequently allow the following notation in the sequel.

\begin{nota}\label{nota:plus_in_Delta}
  Given two elements $\ii, \jj \in \Lambda(\nn)$, we denote by $\ii + \jj$
  (resp $\ii - \jj$) the unique representative of $\TH^{\ii}\TH^\jj$
  (resp. $\TH^{\ii}\TH^{-\jj}$) in $\Lambda(\nn)$.
\end{nota}

The following lemma will be useful.

\begin{lem}\label{lem:addition_without_carries}
  Let $a,b \in [0,N-1]$ such that $a+b< N$. Then
  \begin{equation}\label{eq:sub_additive}
    \pinv(a+b) \succeq \pinv(a) + \pinv(b).
  \end{equation}
  Moreover, denoting, $\pinv(a) = (a_1, \dots, a_m)$
  and $\pinv(b) = (b_1, \dots, b_m)$, the above inequality is an equality
  if and only if for any $i\in [1,m]$, $a_i+b_i < n_i$.
\end{lem}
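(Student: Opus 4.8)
The plan is to transport the claim through the mixed-radix bijection $\varphi$ of \eqref{phi} and reduce it to an elementary inequality between integers. First I would invoke Remark~\ref{rem:increasing}: since $\varphi$, and hence $\pinv$, is a strictly increasing bijection between $(\Lambda(\nn),\preceq)$ and $([0,N-1],\le)$, for any $\ii,\jj\in\Lambda(\nn)$ we have $\ii\succeq\jj$ if and only if $\varphi(\ii)\ge\varphi(\jj)$, and $\ii=\jj$ if and only if $\varphi(\ii)=\varphi(\jj)$. Taking $\ii=\pinv(a+b)$ — which is legitimate precisely because $a+b<N$ — and $\jj=\pinv(a)+\pinv(b)$ — which lies in $\Lambda(\nn)$ by Notation~\ref{nota:plus_in_Delta} — the inequality \eqref{eq:sub_additive} becomes equivalent to the integer inequality
\[
a+b\ \ge\ \varphi\bigl(\pinv(a)+\pinv(b)\bigr),
\]
and the equality case of \eqref{eq:sub_additive} becomes $a+b=\varphi\bigl(\pinv(a)+\pinv(b)\bigr)$.

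Next I would unwind the mixed-radix expansion. Write $\pinv(a)=(a_1,\dots,a_m)$ and $\pinv(b)=(b_1,\dots,b_m)$ with $0\le a_i,b_i\le n_i-1$, and set $\pi_i\eqdef n_1\cdots n_{i-1}$ with the convention $\pi_1=1$. The definition of $\varphi$ gives $a=\sum_{i=1}^m a_i\pi_i$ and $b=\sum_{i=1}^m b_i\pi_i$, hence $a+b=\sum_{i=1}^m(a_i+b_i)\pi_i$. On the other hand, by Notation~\ref{nota:plus_in_Delta} the $i$-th coordinate of $\pinv(a)+\pinv(b)$ equals $(a_i+b_i)\bmod n_i$, so $\varphi\bigl(\pinv(a)+\pinv(b)\bigr)=\sum_{i=1}^m\bigl((a_i+b_i)\bmod n_i\bigr)\pi_i$. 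Since $0\le a_i+b_i\le 2n_i-2<2n_i$, we have $(a_i+b_i)\bmod n_i=a_i+b_i-\varepsilon_i n_i$ with $\varepsilon_i\in\{0,1\}$, where $\varepsilon_i=1$ exactly when $a_i+b_i\ge n_i$. Subtracting the two expressions,
\[
a+b-\varphi\bigl(\pinv(a)+\pinv(b)\bigr)=\sum_{i=1}^m\varepsilon_i n_i\pi_i=\sum_{i:\,a_i+b_i\ge n_i}n_1\cdots n_i\ \ge\ 0,
\]
which is the desired inequality; moreover this difference vanishes if and only if $\varepsilon_i=0$ for all $i$, that is, $a_i+b_i<n_i$ for every $i\in[1,m]$. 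Combining with the reduction of the first paragraph proves both assertions of the lemma.

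I do not expect a genuine obstacle here: once the reduction via monotonicity of $\varphi$ is in place, the argument is pure bookkeeping. The only point deserving a line of care is that each coordinate overflows by at most one multiple of $n_i$ — this is why the ``carries'' $\varepsilon_i$ are genuinely $0/1$ and the difference above is the manifestly nonnegative sum displayed — together with noting that the hypothesis $a+b<N$ is exactly what makes $\pinv(a+b)$ well defined so that the comparison in $\Lambda(\nn)$ has meaning.
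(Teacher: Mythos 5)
Your proof is correct and follows essentially the same route as the paper's: expand $a+b$ and $\varphi(\pinv(a)+\pinv(b))$ in the mixed radix $(1,n_1,n_1n_2,\dots)$, compare the coordinates $a_i+b_i$ with their reductions mod $n_i$, and transport the resulting integer inequality back through the increasing bijection $\pinv$. Your explicit bookkeeping of the carries $\varepsilon_i\in\{0,1\}$ is just a slightly more detailed phrasing of the paper's observation that $a_i+b_i\ge u_i$ with equality iff $a_i+b_i<n_i$.
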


\begin{proof}
  Set $\pinv(a) = (a_1, \dots, a_m)$ and
  $\pinv(b) = (b_1, \dots, b_m)$. Then, by definition of $\varphi$,
  \begin{equation}\label{eq:a+b}
    a+b = (a_1+b_1) + (a_2+b_2)n_1 + \cdots + (a_m+b_m) n_1 \cdots n_{m-1},
  \end{equation}
  while
  \begin{equation}\label{eq:varpĥi(a+b)}
    \varphi(\pinv(a) + \pinv(b)) = u_1 +
    u_2n_1 + \cdots + u_m n_1 \cdots n_{m-1},
  \end{equation}
  where for any $i$, $u_i$ is the unique representative of
  $a_i+b_i \mod n_i$ in $[0, n_i-1]$. Equivalently, it is the
  remainder of $a_i+b_i$ by the Euclidean division by $n_i$. In
  particular for any $i$, $a_i+b_i \geq u_i$ and equality holds if and only if
  $a_i+b_i < n_i$. This last observation
  applied to equations (\ref{eq:a+b}) and (\ref{eq:varpĥi(a+b)}) yields
  $a+b \geq \varphi (\pinv(a) + \pinv(b))$ with equality if and only if $a_i+b_i < n_i$ for all $i \in [1,m]$. Applying
  the increasing map $\pinv$ on both sides yields the result.
\end{proof}

\begin{rem}\label{rem:carries}
  Lemma~\ref{lem:addition_without_carries} can be interpreted as
  follows. The map $\pinv$ expresses integers $a,b$ in $[0, N-1]$ in
  the ``basis'' $(1, n_1, n_1 n_2, \dots, n_1 n_2 \cdots n_{m-1})$. The
  operation $\pinv(a)+ \pinv(b)$ introduced in
  Notation~\ref{nota:plus_in_Delta} consists in the
  addition in $\GG$ which is an addition ``without
  carries'' while $a+b$ is an addition in $\mathbb{Z}$, \emph{i.e.}
  with carries.
\end{rem}

\begin{lem}\label{furthest}
  Suppose the code is $\RM_{\TH}(r,\nn) \subseteq \LL[\GG]$ with minimum rank
  distance $d$ and $r = \sum_{i=s+1}^m(n_i -1) + \ell$, for uniquely
  determined $1 \le s \le m-1$ and $0 \le \ell < n_{s}-1$. When
  beginning the decoding process all the unknown coefficients
  are among the $e_{i}$'s for $i \in [0,N-d]$ and the farthest one is
  $e_{N-d}$ where,
  \[
    N-d = \varphi(0, \ldots, 0, \ell, n_{s+1}-1, \ldots, n_m-1).
  \]
\end{lem}

\begin{proof}
  According to Section~\ref{subsec:known_coefficients}, when starting
  the decoding process the known coefficients of $E$ are the $e_s$
  such that $|\pinv(s)| > r$. Hence, the farthest unknown coefficient
  is the coefficient $e_s$ such that
  \[ \pinv(s) = \max_{\prec_{revlex}} \{\ii= (i_1,\ldots, i_m)
    ~\colon~ \ii \in \Lambda(\nn), \text{ and } |\ii| \le r \},\]
  which, since $r = \sum_{i=s+1}^m(n_i -1) + \ell$ is nothing but
  $\pinv (s) = (0, \ldots, 0, \ell, n_{s+1}-1, \ldots, n_m -1)$. Let us prove that
  $s = N-d$.  From Theorem~\ref{min},
  $d = (n_s-\ell) n_1 \cdots n_{s-1}$. Hence,
  \begin{equation}\label{eq:N-d}
    N-d = \prod\limits_{i=1}^m n_i - (n_s
    -\ell)\prod\limits_{i=1}^{s-1} n_i =
     \ell \prod_{i=1}^{s-1}n_i +
    \prod_{i=1}^s
      n_i \left(\Big(\prod_{i=s+1}^m n_i\Big)-1\right).
  \end{equation}
  Using the convention that $\prod_{i=s+1}^s n_i = 1$, one rewrites
  the rightmost factor above as an alternate sum:
  \begin{equation}
    \label{eq:alternate_sum}
    \Big(\prod_{i=s+1}^m n_i\Big)-1 = \sum_{t = s+1}^{m} \Big( \prod_{i=s+1}^{t} n_i  - \prod_{i=s+1}^{t-1} n_i \Big) = \sum_{t=s+1}^{m} (n_t-1) \prod_{i=s+1}^{t-1} n_i.
  \end{equation}
  The
  combination of~(\ref{eq:N-d}) and~(\ref{eq:alternate_sum}) yields
  \[
    N-d\  =\ \ell \prod_{i=1}^{s-1}n_i + \sum_{t=s+1}^{m}  (n_t-1) \prod_{i=1}^{t-1}n_t
       \  =\ \varphi(0, \ldots, 0, \ell, n_{s+1}-1, \ldots, n_m -1).
     \]
     Finally, since $\varphi$ is an increasing map, the other
     unknown coefficients $e_i$ satisfy $i \in [0, N-d]$.
\end{proof}

Now we observe the positions of these unknown coefficients in $\Dick_{\GG}(E)$ where the elements of $\GG$ are ordered reverse lexicographically. Here is a useful description of the $(i,j)$-th entry of $\Dick_{\GG}(E)$.
\begin{lem}\label{simple}
  Let $E = \sum_{t=0}^{N-1} e_t \gg_t \in \LL [\GG]$.  For
  $i, j \in [0, N-1]$, the $(i,j)$-th entry $\mathbf{D}_{i,j}$
  of $\Dick_{\GG}(E)$ is $\gg_j(e_k)$ where $k \in [0,N-1]$ is the
  unique element such that
  $\pinv(i) = \pinv(j) + \pinv(k)$ with `$+$'
  being given by Notation~\ref{nota:plus_in_Delta}.
\end{lem}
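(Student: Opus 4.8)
The plan is to unwind Definition~\ref{Dicksondefn} in the reverse-lexicographic indexing fixed in this section and to match it against the carry-free addition of Notation~\ref{nota:plus_in_Delta}. Applying Definition~\ref{Dicksondefn} to $E = \sum_{t=0}^{N-1} e_t \gg_t \in \LL[\G]$ (here $\G$ has order $N$), the $(i,j)$-th entry of $\Dick_\G(E)$ is $\gg_j(e_{\sigma_j^{-1}(i)})$, where $\sigma_j \in \mathfrak{S}_N$ is the permutation induced by left multiplication by $\gg_j$, that is, $\sigma_j(k) = i$ exactly when $\gg_j \gg_k = \gg_i$. So the entire content of the lemma reduces to the identity $\sigma_j^{-1}(i) = k$ for the index $k$ described in the statement, together with the observation that such a $k$ exists and is unique.

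For existence and uniqueness I would note that, for fixed $i,j$, there is exactly one $\gg_k \in \G$ with $\gg_j \gg_k = \gg_i$, namely $\gg_k = \gg_j^{-1}\gg_i$; since $\varphi$ is a bijection $\Lambda(\nn) \to [0,N-1]$ (see~\eqref{phi}), this $\gg_k = \TH^{\pinv(i) - \pinv(j)}$ corresponds to a unique index $k \in [0,N-1]$.

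Next I would translate the relation $\sigma_j^{-1}(i) = k$, i.e.\ $\gg_j \gg_k = \gg_i$, into exponent language. Since $\gg_t = \TH^{\pinv(t)}$ for every $t$, this relation reads $\TH^{\pinv(j)}\TH^{\pinv(k)} = \TH^{\pinv(i)}$ in $\G$. By the very definition of ``$+$'' on $\Lambda(\nn)$ (Notation~\ref{nota:plus_in_Delta}, whose reading as an addition ``without carries'' is spelled out in Remark~\ref{rem:carries}), the left-hand side equals $\TH^{\pinv(j)+\pinv(k)}$; and two $\TH$-monomials whose exponents are taken as canonical representatives in $\Lambda(\nn)$ coincide if and only if those representatives coincide, because distinct elements of $\Lambda(\nn)$ give distinct elements of $\G$. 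Hence $\gg_j \gg_k = \gg_i$ is equivalent to $\pinv(i) = \pinv(j)+\pinv(k)$. Combining this with the first paragraph yields $\mathbf{D}_{ij} = \gg_j(e_{\sigma_j^{-1}(i)}) = \gg_j(e_k)$, with $k$ the unique index satisfying $\pinv(i) = \pinv(j)+\pinv(k)$, which is the claim.

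I do not expect a genuine obstacle here: this lemma is essentially a bookkeeping restatement of Definition~\ref{Dicksondefn} in the reverse-lexicographic labelling of $\G$. The only point deserving a line of care is the identification of the abstract permutation $\sigma_j$ of Definition~\ref{Dicksondefn} with left translation by $\gg_j = \TH^{\pinv(j)}$ through the bijection $\varphi$, and the insistence that ``$+$'' in the statement be read as the group addition of Notation~\ref{nota:plus_in_Delta} rather than as ordinary integer addition (which would be incorrect, as Lemma~\ref{lem:addition_without_carries} already warns).
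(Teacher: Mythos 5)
Your proposal is correct and follows the same route as the paper's proof: unwind Definition~\ref{Dicksondefn} to get $\mathbf{D}_{ij} = \gg_j(e_{\sigma_j^{-1}(i)})$, set $k = \sigma_j^{-1}(i)$ so that $\gg_j\gg_k = \gg_i$, and translate this group relation into $\pinv(i) = \pinv(j)+\pinv(k)$ via Notation~\ref{nota:plus_in_Delta}. The extra remarks on existence/uniqueness of $k$ and on reading ``$+$'' as the carry-free addition are correct and merely make explicit what the paper leaves implicit.
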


\begin{proof}
  By Definition \ref{Dicksondefn}, for any $i,j$,
  $\mathbf{D}_{i,j} = \gg_j(e_{{\sigma_j}^{-1}(i)})$. Set
  $k={\sigma}_j^{-1}(i)$, then $\gg_j \gg_k = \gg_i$. Thus,
  $\TH^{\pinv(j)} \TH^{\pinv(k)} =
  \TH^{\pinv(i)}$, which means
  $\pinv(j) + \pinv(k) = \pinv(i)$.
\end{proof}

It easily follows from Lemma \ref{simple} that $\mathbf{D}_{\omega,0} = e_\omega$ for any $\omega \in [0, N-1]$ and that means $e_\omega$ lies on the diagonal $\Delta_{\omega}$.
In addition, Lemma~\ref{occurrences} to follow gives the number of occurrences of an unknown coefficient $e_{\omega}$ and its conjugates on the diagonal $\Delta_{\omega}$. First, we need to recall
some property of the code's minimum distance.

\begin{pro}[{\cite[Theorem 50]{ACLN}}] \label{minrankdistance}
Let $r$ be a positive integer and $\nn = (n_1, \ldots, n_m)\in \mathbb{N}^m$ be a vector such that $n_1 \ge n_2 \ge \cdots \ge n_m \ge 2$. If $d(r, \nn)$ is the minimum rank distance of $RM_{\TH}(r, \nn)$, then  
\[
d(r,\nn) = \min \Bigl\{\prod_{i=1}^m(n_i - u_i) ~ | ~ \mathbf{u} = (u_1, \ldots, u_m) \in \Lambda(\nn),~ |\mathbf{u}| \le r\Bigr\}.
\]  
\end{pro}

\begin{lem}\label{occurrences}
Let $e_\omega$ be an unknown coefficient of the $\TH$-polynomial $E$ for some $\omega \in [0, N-d]$, where $d$ is the minimum rank distance of $\RM_{\TH}(r,\nn)$. Then the number of conjugates of $e_\omega$ appearing on the diagonal $\Delta_\omega$ is at least $d$. 
\end{lem}
 
 \begin{proof}
   Let a conjugate of $e_{\omega}$ of the form $\gg_j(e_\omega)$ appear on the diagonal $\Delta_\omega$. It follows from Lemma~\ref{simple} that if $\pinv(\omega + j) = \pinv(j) + \pinv(\omega)$ for $j \in [0,N-1]$, then $\mathbf{D}_{\omega + j,j} = \gg_j(e_\omega)$.
   Writing $\pinv(\omega) = (\omega_1, \dots, \omega_m) \in \Lambda(\nn) $, then, from Lemma~\ref{lem:addition_without_carries}, the number of occurrences of a conjugate of $e_{\omega}$ on $\Delta_{\omega}$ is given by the number of $(j_1, \dots, j_m) \in \Lambda (\nn)$ such that
   for any $i \in [1,m]$, $j_i + \omega_i \leq n_i - 1$. The number of such $m$--tuples equals $\prod_{i=1}^m (n_i - \omega_i)$. Since we are considering only the unknown coefficients, \emph{i.e.},
   $0 \le |\varphi^{-1}(\omega)| \le r$, the lower bound follows directly from Proposition \ref{minrankdistance}.
 \end{proof}

For the iterative recovery of the unknown coefficients, the following lemma will be useful.

\begin{lem}\label{important}
    For any $\tau \in [0,N-1]$, the elements lying strictly below the diagonal $\Delta_{\tau}$ are of the form $\gg(e_{\vartheta})$ for some $\gg \in \GG$ and $\vartheta > \tau$.
\end{lem}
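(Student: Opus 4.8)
The plan is to convert the geometric statement into an inequality between indices and then invoke the comparison between the ``carry-free'' addition on $\G$ and ordinary integer addition supplied by Lemma~\ref{lem:addition_without_carries}. An entry of $\Dick_\G(E)$ lying strictly below $\Delta_\tau$ is located at a position $(i,j)$ with $i-j>\tau$; in particular $i>j$. By Lemma~\ref{simple} this entry equals $\gg_j(e_k)$, where $k\in[0,N-1]$ is the unique index satisfying $\pinv(i)=\pinv(j)+\pinv(k)$ with ``$+$'' as in Notation~\ref{nota:plus_in_Delta}. Hence it suffices to show $k>\tau$, and the lemma then follows with $\gg=\gg_j$ and $\vartheta=k$.

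I would prove $k>\tau$ by contradiction, supposing $k\le\tau$, and split according to whether or not $j+k<N$. In the main case $j+k<N$, Lemma~\ref{lem:addition_without_carries} applies to the integers $a=j$ and $b=k$ and yields $\pinv(j+k)\succeq\pinv(j)+\pinv(k)=\pinv(i)$; since $\pinv$ (equivalently $\varphi$, by Remark~\ref{rem:increasing}) is strictly increasing, this gives $j+k\ge i$, so $i\le j+k\le j+\tau<i$, a contradiction. In the remaining case $j+k\ge N$ one has $k\ge N-j$; but $i>j+\tau$ together with $i\le N-1$ forces $j+\tau\le N-2$, that is $N-j\ge\tau+2$, whence $k\ge\tau+2>\tau$, again contradicting $k\le\tau$. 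Therefore $k>\tau$ in all cases, which proves the claim.

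The only delicate point is the bookkeeping: one must carefully distinguish the addition $\pinv(j)+\pinv(k)$ performed componentwise modulo the $n_\ell$ inside $\G$ from the ordinary integer sum $j+k$, and remember that Lemma~\ref{lem:addition_without_carries} is stated only under the hypothesis $j+k<N$ — which is exactly why the short separate treatment of the degenerate case $j+k\ge N$ is needed. Everything else reduces to one-line manipulations of indices.
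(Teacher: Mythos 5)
Your proof is correct and follows essentially the same route as the paper: identify the entry via Lemma~\ref{simple}, compare the carry-free sum with the integer sum using Lemma~\ref{lem:addition_without_carries} together with the monotonicity of $\varphi$, and dispose of the wrap-around case $j+k\ge N$ separately. The only difference is your contradiction framing, which is cosmetic — the paper argues directly that $\vartheta+p\ge\tau_1>\tau+p$.
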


\begin{proof}
  Following Definition \ref{diagonal},
  \[
    \Delta_{\tau} = \{(\tau + p, p) \colon p \in [0,N-1] \text { and }
    \tau + p \le N-1 \}.
  \]
  Thus, any position strictly below the diagonal $\Delta_{\tau}$ is of
  the form $(\tau_1,p)$, where $\tau_1 > \tau + p$ for some
  $p \in [0, N-1]$ such that $\tau + p \le N-1$.
  Let $\vartheta \in [0, N-1]$ such that
  $\Dick_{\tau_1,p} = \gg_p (e_\vartheta)$.
  We aim to prove that $\vartheta > \tau$.

  If $\vartheta + p \geq N$ then, since $N>\tau + p$, we get the
  proof. Hence, we suppose now that $\vartheta + p < N$.
  From
  Lemma~\ref{lem:addition_without_carries}:
  \begin{equation}\label{eq:theta+p}
    \pinv(\vartheta + p) \succeq \pinv(\vartheta) +
    \pinv(p).
  \end{equation}
  Moreover, since $\Dick_{\tau_1, p} = \gg_p (e_{\vartheta})$,
  Lemma~\ref{simple} yields
  \begin{equation}\label{eq:tau_1}
    \pinv(\tau_1) = \pinv(\vartheta) + \pinv(p).
  \end{equation}
  Combining (\ref{eq:theta+p}), (\ref{eq:tau_1}) and the increasing
  property of $\varphi$, we get $\vartheta + p \geq \tau_1$, while, by
  assumption, $\tau_1 > \tau+p$. Hence $\vartheta > \tau$.
\end{proof}

\begin{cor}\label{cor:known_coeffts}
  When starting the decoding process, any entry of $\Dm_{\GG}(E)$
  lying below the diagonal $\Delta_{N-d}$ is known.
\end{cor}

\begin{proof}
  This is a direct consequence of Lemmas~\ref{furthest} and~\ref{important}.
\end{proof}

\subsection{Recovering the unknown coefficients by majority voting}
We iteratively recover the unknown coefficients $e_i$ by decreasing
indexes, starting, from Lemma~\ref{furthest}, with $\efar =
e_{N-d}$. At any step of the decoding process, some coefficients of
$E$ remain unknown and we denote by $\efar$ the farthest one,
\emph{i.e.} the $e_s$ with the largest possible index $s$. The
corresponding diagonal $\Delta_s$ will be referred to as $\Dfar$.
According to Corollary~\ref{cor:known_coeffts} any entry of
$\Dm_{\GG}(E)$ lying below $\Dfar$ is known.

Considering $\Dfar$, the first idea could consist in doing
what we did for Gabidulin codes in
Section~\ref{sec:dickson_framework}, which is taking a
$(t+1)\times (t+1)$ submatrix whose top right--hand corner lies on the
diagonal and hence contains the value $\gg (\efar)$ for some known
$\gg \in \GG$ and deduce $\efar$ by solving a minor cancellation
equation. Since in this submatrix all the entries are known but the
top right--hand one, this minor cancellation equation is of the form
\[M\gg(\efar) + \lambda = 0,\] where $M$ equals the bottom left--hand
$t\times t$ minor and $\lambda$ only depends on the known
entries. However, for the technique to succeed we need to have
$M \neq 0$, which, in the Gabidulin case, \emph{i.e.} when $\GG$ is
cyclic is guaranteed by Corollary~\ref{consecutive}. Unfortunately, in
the non--cyclic case, we do not have any guarantee that a given
$t \times t$ minor does not vanish. To circumvent this issue, instead
of considering one submatrix whose top right--hand corner equals some
conjugate of $\efar$, we consider several such matrices whose top
right--hand corner lies on $\Dfar$.

The idea of majority voting rests on the notion of
\emph{discrepancies} (or \emph{pivots}) that we introduce now.  To
establish the statements, we borrow some notions from \cite[Definition
8.7]{HP02}.

\subsubsection{Discrepancies of a matrix}
We let $\Dm \in \LL^{N\times N}$ be a matrix. For any
$(i,j) \in [0, N-1]^2$, we denote by $\Dm(i,j)$ the submatrix of $\Dm$
whose bottom left-hand corner is that of $\Dm$ and whose top
right-hand corner is $\Dm_{i,j}$ that is to say
\[
\Dm(i,j) \eqdef \{\Dm_{i',j'} ~\colon~ i \le i' \le N-1 \text{ and } 0 \le j' \le j\}. 
\]
The definition is illustrated in the figure below.
\medskip

\begin{center}
\begin{tikzpicture}
\draw[very thick] (0,0) rectangle (4,4);
\draw[thick] (0,0) rectangle (1.5,2);
\node[left] at (0,2) {$i$};
\node[below] at (1.5,0) {$j$};
\node at (.75,1) {$\Dm(i,j)$};
\node at (2.2,2.5) {$\Dm$};
\end{tikzpicture}
\end{center}

In the sequel, we will mainly be concerned by the rank of these
submatrices $\Dm(i,j)$ for $(i,j)\in [0,N-1]^2$ and we will fix the
convention that
\[
  \forall i \in [0,N-1],\ \rk (\Dm(i,-1))  \eqdef 0\qquad \text{and}\qquad
  \forall j \in [0,N-1],\ \rk (\Dm(N,j))  \eqdef 0.
\]
In short, ``empty matrices'' are assumed to have rank zero.

\begin{defn}
  A pair $(i,j)\in [0,N-1]^2$ is called a \emph{discrepancy}
  or a \emph{pivot} if
  \[
    \rk \Dm(i+1,j) = \rk \Dm(i+1,j-1) = \rk \Dm(i,j-1)
    \quad \text{and}\quad
    \rk \Dm (i,j) \neq \rk \Dm(i+1,j-1).
  \]
\end{defn}
The matrices involved in the definition are represented in the figure below.

\begin{center}
  \begin{tikzpicture}
\draw[very thick] (0,0) rectangle (4,4);
\draw[thick] (0,0) rectangle (2.7,3);
\draw[dashed] (0,0) rectangle (2.9,3.2);
\draw[dashed] (0,0) rectangle (2.9,3);
\draw[dashed] (0,0) rectangle (2.7,3.2);
\node[left, scale=.8] at (0,3.4) {$i$};
\node[left, scale=.8] at (0,3) {$i+1$};
\node[below, scale=.8] at (3.01,0) {$j$};
\node[below, scale=.8] at (2.5,0) {$j-1$};
\node[scale=.8] at (1.35,1.7) {$\Dm(i+1,j-1)$};    
\node[right, scale=.8] at (5,1.7) {$\Dm(i+1,j)$};    
\node[right, scale=.8] at (5,4.2) {$\Dm(i,j-1)$};    
\node[right, scale=.8] at (5,3.1) {$(i,j)$};
\draw[->] (5,4.2) -- (1.35,3.1);
\draw[->] (5,1.7) -- (2.8,1.7);
\draw[->] (5,3.1) -- (2.8,3.1);
\end{tikzpicture}
\end{center}

\begin{rem}
  Actually \emph{discrepancies} should be defined with respect to a given corner of the matrix which occurs in all the
  matrices $\Dm(i,j)$. In this paper, discrepancies are defined with respect to the bottom left--hand corner: matrices
  $\Dm(i,j)$ for $(i,j)\in [0, N-1]^2$ all include the $(N-1,0)$ entry of $\Dm$.
\end{rem}

\begin{rem}\label{dis}
  Note that the \emph{discrepancies} are the pivots of $D$ obtained
  via a reverse Gaussian elimination, \emph{i.e.}, we start from the
  bottom--most row of the matrix and only allow row operations of the
  form
  ``$\text{Row}_i \leftarrow \text{Row}_i + \lambda \text{Row}_j$''
  with $i<j$. In particular, we do \textbf{not} allow to swap
  rows. Equivalently, we only allow the left action of the subgroup of
  $\mathbf{GL}_N(\LL)$ composed by upper triangular matrices with only
  $1$'s on the diagonal. Such a Gaussian elimination applied on a
  matrix of rank $t$ will ultimately lead to a matrix with only $t$
  nonzero rows and the leftmost nonzero entries of these rows will
  yield the positions of the discrepancies.
\end{rem}

\begin{exa}
  Consider the matrix over $\mathbb{Q}$:
  \[
    \begin{pmatrix}
      0 & 1 & 3 & -1 \\
      2 & 2 & 1 & 1  \\
      2 & 2 & 0 & 1  \\
      0 & 1 & 2 & -1
    \end{pmatrix}.
  \]
  After performing Gaussian elimination from the bottom and without
  swapping rows we get the following reduced form where the discrepancy positions are written in bold symbols:
  \[
    \begin{pmatrix}
      0 & 0 & 0 & 0 \\
      0 & 0 & \mathbf{1} & 0  \\
      \mathbf{2} & 0 & -4 & 3  \\
      0 & \mathbf{1} & 2 & -1
    \end{pmatrix}.
  \]  
\end{exa}

The following statement is usual in the literature. We prove it for the sake of
convenience.

\begin{pro}\label{prop:discrepancies}
  Let $\Dm \in \LL^{N \times N}$. Then
  \begin{enumerate}
   \item\label{item:col} there is at most one discrepancy per column;
   \item\label{item:row} there is at most one discrepancy per row;
  \item\label{item:rk} the total number of discrepancies equals $\rk (\Dm)$.
  \end{enumerate}
\end{pro}

\begin{proof}
  Let $i, i', j \in [0,N-1]$ with $i < i'$ and suppose that
  both $(i,j)$ and $(i',j)$ are discrepancies.
  Then $\rk \Dm (i',j) \neq \rk \Dm (i',j-1)$ which entails
  that $\rk \Dm (i+1,j) \neq \rk \Dm (i+1, j-1)$ and contradicts the
  fact that $(i,j)$ is a discrepancy. This proves (\ref{item:row})
  and the proof of (\ref{item:col}) is very similar.
  
  To prove (\ref{item:rk}) consider the sequence of matrices
  $\Dm(N-1,0), \Dm(N-1,1), \dots, \Dm(N-1,N-2), \Dm(N-1,N-1)=\Dm$,
  \emph{i.e.} the sequence of submatrices consisting of the $i$
  leftmost columns for increasing $i$. Since the rank of a matrix is
  larger than the rank of any of its submatrices, we deduce that if a
  pair $(i,j)$ is a discrepancy, then
  $\rk \Dm(N-1,j-1) = \rk \Dm(N-1,j)+1$. Conversely, such a rank drop
  occurs only if there is a discrepancy in the $j$--th column. Since
  there is at most one discrepancy per column, we deduce that the
  number of discrepancies equals $\rk (\Dm)$.
\end{proof}

\subsubsection{The majority voting process}
From now on and until the end of the current section, for the sake of
convenience, the matrix $\Dm_{\GG}(E)$ is denoted by $\Dm$. Recall that
we suppose we know a part of this matrix and aim to recover $\efar$:
the unknown coefficient $e_s$ with the largest index $s$.
Recall that we denoted by $\Dfar$ the diagonal $\Delta_s$ and that:
  \begin{itemize}
  \item From Lemma~\ref{occurrences}, at least $d$ entries of $\Dfar$
    are conjugates of $\efar$ and the positions of these entries are
    known;
  \item From Corollary~\ref{cor:known_coeffts}, the entries of
    $\Dick_{\GG}(E)$ strictly below $\Dfar$ are known.
  \end{itemize}

\subsection*{Majority votes for the correct $\efar$.}

\begin{defn}[Candidates]\label{def:candidate}
  A position $(i, j)$ on the diagonal $\Dfar$ is called a
  \emph{candidate} if it satisfies the two following conditions:
  \begin{enumerate}[(i)]
  \item $\Dm_{i,j}$ is a conjugate of $\efar$;
  \item $\Dm(i+1,j)$, $\Dm(i,j-1)$ and $\Dm(i+1,j-1)$ have the same
    rank.
  \end{enumerate}
  Otherwise, $(i,j)$ is said to be a \emph{non-candidate}.
\end{defn}

\begin{pro}[Candidate value]\label{pro:candidate_value}
  If $(i,j)$ is a candidate, then there is a unique value $d'_{i,j}$ to assign to the unknown entry $\Dm_{i,j}$ such that $\Dm(i,j)$ and $\Dm(i+1,j-1)$ have the same rank. This unique value is referred to as the \emph{candidate value}.
\end{pro}

\begin{proof}
  Since $\Dm(i+1,j-1)$ and $\Dm(i+1,j)$ have the same rank, the
  rightmost column of $\Dm(i+1,j)$ is a linear combination of the
  other ones.  For $\Dm(i,j)$ to have the same rank, its rightmost
  column should be expressed as the same linear combination of the
  other columns of $\Dm(i,j)$. Thus, $\Dm_{i,j}$ should be equal to the
  aforementioned linear combination of the entries on its left.
\end{proof}

Now, for each candidate $(i,j)$ on $\Dfar$, we compute the candidate
value (see Proposition~\ref{pro:candidate_value}) $d'_{i,j}$ for
$\Dm_{i,j}$ and compute $\gg_j^{-1}(d'_{i,j})$ which is a
\emph{predicted value} for $\efar$.

At this step, two situations may occur:
\begin{itemize}
\item either the prediction was true: $\Dm_{i,j} = d'_{i,j}$, in this case the candidate is said to be \emph{true};
\item or the prediction is wrong: $\Dm_{i,j} \neq d'_{i,j}$ and then
  $\rk \Dm(i,j) \neq \rk \Dm(i+1,j-1)$ which entails that $(i,j)$ is a
  discrepancy. In this case, the candidate is said to be \emph{false}.
\end{itemize}
Of course, when considering a given candidate, we cannot directly
guess whether it is true or false.  The key of the majority voting
technique rests on two facts:
\begin{enumerate}
\item True candidates give a true predicted value of $\efar$.
\item False candidates gives rise to new discrepancies while, from
  Proposition~\ref{prop:discrepancies}, the total number of
  discrepancies equals the rank of the matrix which is at most half
  the minimum distance.
\end{enumerate}
With the two above facts at hand, we deduce that there cannot be ``too many'' false candidates.
The statement to follow actually shows that a strict majority of candidates on the diagonal are true.
Thus, one can collect the predicted values for $\efar$ for any candidate and the one that
occurs in strict majority will be the actual value of $\efar$.

\begin{pro}\label{prop:majority_voting}
  Let $T$ be the number of true candidates on the diagonal $\Dfar$ and $F$ be the number of false ones,
  then \[T>F.\]
\end{pro}

\begin{proof}
Let $K$ denote the number of discrepancies
  that lie below $\Dfar$ (`$K$' stands for ``known discrepancies'').
  Using similar arguments as in the proof of
  Proposition~\ref{prop:discrepancies}, one shows that a position
  $(i,j)$ on $\Dfar$ such that $\Dm_{i,j}$ is conjugate to $\efar$
  is a candidate if and only if there is no known discrepancy on row $i$
  and on column $j$.
  Thus, since from Lemma~\ref{occurrences}, at least $d$ entries on $\Dfar$ are
  conjugates of $\efar$ we deduce that
\begin{equation}\label{candidatebound}
T+F = \#\text{candidates} \geq d-2K.
\end{equation}
Next, since false candidates yield discrepancies, from Proposition~\ref{prop:discrepancies}(\ref{item:rk}), we deduce that
\begin{equation}\label{eq:nr_disc}
  K+F \leq \rk (\Dm) = t.
\end{equation}
Recall that $t \le \lfloor \frac{d-1}{2} \rfloor$. Then, Equations
\eqref{candidatebound} and \eqref{eq:nr_disc} imply that
$T>F$.\end{proof}

\begin{rem}
  Note that since the number $F$ of false candidates is a nonnegative integer, Proposition~\ref{prop:majority_voting}
  asserts that $T>0$. In particular, the set of candidates is never empty.
\end{rem}

\noindent \textbf{In summary}, the decoding algorithm works as follows:
Given $Y = C+E$, with $C \in \RM_{\TH}(r,\nn)$
\begin{itemize}
\item Identify the known coefficients of $E$: they are the coefficients of $Y$ corresponding to monomials of $\TH$--degree $> r$;
\item Recover iteratively the unknown coefficients $e_\omega$ by decreasing index $\omega$ by applying the majority voting
  technique on the diagonal $\Delta_{\omega}$.
\end{itemize}
Algorithm~\ref{alg:dec} gives a precise description of this decoding procedure.

\begin{algorithm}
\caption{MajorityVote$(\cdot)$} \label{alg:mv}
\KwData{
  \begin{itemize}
  \item $\Am \in {\LL}^{N \times N}$ the $\GG$-Dickson matrix of rank $t$ of the error $E$, where some entries are unknown;
  \item $\omega \in [0, N-d]$
    such that all the entries of $\Am$ strictly below $\Delta_{\omega}$ are known;
  \item The list $L$ of positions on the diagonal $\Delta_{\omega}$ of
    $\mathbf{A}$ which contain a conjugate of $e_{\omega}$.
  \end{itemize}
}
\KwResult{The unknown coefficient $e_{\omega}$.}
\vspace{0.1cm}
\hrule
\vspace{0.1cm}

Candidates $\gets [\, ]$ \qquad /*An empty list */

\medskip

\For{$(i,j)$ in L}{
  \If{rank $\Am(i,j-1)$ = rank $\Am(i+1,j)$ = rank $\Am(i+1,j-1)$}{Add the pair $(i,j)$ to Candidates}
}

\medskip

Votes $\gets [\,]$\\

\medskip

\For{$C = (\omega +j, j)$ in \emph{Candidates}}{
  $\alpha\ \gets$ unique value for $\Am_{i,j}$ so that $\rk \Am(i+1,j-1) = \rk \Am (i,j)$ \hfill /* Proposition~\ref{pro:candidate_value} */ \\
$\text{Pred} \gets \gg_j^{-1}(\alpha)$ \qquad /* $C=(\omega+j, j)$, hence $e_{\omega} = \gg_j^{-1}(\Am_{i,j})$ */\\
Add $\text{Pred}$ to Votes
}

\medskip

\Return The element that occurs in Votes in strict majority
\end{algorithm}

\begin{algorithm}
\caption{RankRMDec$(\cdot)$}\label{alg:dec}
\KwData{$Y = \sum_{i}y_i \gg_i \in \LL[\GG]$, $r,\, \nn,\,N, \, t \le \lfloor\frac{d-1}{2} \rfloor$, where $d$ is the minimum distance of $RM_{\TH}(r,\nn)$.}
\KwResult{$E \in \LL[\GG]$ such that $\rk(Y -E)=t$.}
\vspace{0.1cm}
\hrule
\vspace{0.1cm}

$\Dm \gets$ $\GG$-Dickson matrix with unknown entries \hfill /* $\Dm$  will be the $\GG$ Dickson matrix of $E$ */\\
$L \gets $ indexes (in $[0, N-1]$) of unknown coefficients of $E$ \hfill /* They are the $i \in [0, N-d], $ */\\ \hfill /* such that $|\varphi^{-1}(i)| \leq r$. */\\
\hfill /* (See (\ref{phi}) and (\ref{eq:degree})  for the definitions). */\\
\For{$k \notin L$}{
$e_k \gets y_k$\\
\For{$0 \le i , j\le N-1$ such that $\gg_j \gg_k = \gg_i$}{
  $\Dm_{i,j} \gets \gg_j(e_k)$ \hfill /* Filling in $\Dm$ with known entries from $Y$ */
}
}

\For{$\omega \in L$ \textbf{in decreasing order}}{
$L_{\Delta_{\omega}} \gets$ positions $(\omega +j, j)$ in $\Delta_{\omega}$
such that $\Dm_{i,j}$ is conjugate to $e_{\omega}$\\
$e_{\omega} \gets \text{MajorityVote}(\Dm, \omega, L_{\Delta_\omega})$\\
\For{$0 \le i, j \le N-1$ such that $\gg_j \gg_{\omega} = \gg_i$}{
$\Dm_{i,j} \gets \gg_j(e_{\omega})$
}
}

\Return $E = \sum_i e_i \gg_i$
\end{algorithm}

\begin{rem}
  Actually, Algorithms~\ref{alg:mv} and~\ref{alg:dec} should be viewed
  as proofs of concept but should not be implemented as they are since
  Algorithm~\ref{alg:mv} involves too many independent rank
  calculations, \emph{i.e.} too many independent Gaussian eliminations
  on submatrices.  In order to get a good complexity, it is possible
  to perform a very similar algorithm that only requires to perform
  one Gaussian elimination on the whole matrix $\Dm$. This algorithm
  will be presented in Appendix~\ref{sec:appendix_voting} and will be
  the reference for the complexity analysis to follow.
\end{rem}

\subsection{Complexity}
As already mentioned, the way we described majority voting is not
fully efficient in terms of complexity since for any unknown
coefficient we should compute as many ranks as the number of
candidates. It is possible to avoid this cost by performing a
single Gaussian elimination process on a $N \times N$ matrix
in order to decode. The process is described in Appendix~\ref{sec:appendix_voting}.
This leads to the following statement.

\begin{thm}
  Let $\LL/\K$ be a degree $N$ Abelian extension with Galois group
  $\GG$ equipped with a system of generators
  $\TH=(\theta_1, \dots, \theta_m)$. Denote by
  $\nn = (n_1, \dots, n_m)$ the sequence of orders of the $\theta_i$'s
  in $\GG$.  Let $r \leq \sum_{i=1}^m (n_i-1)$ and $d$ denote the
  minimum distance of the code $\RM_{\TH}(r,\nn)$. Suppose
  we are given a primitive element $x$ of $\LL/\K$. Then,
  Algorithm~\ref{alg:dec} corrects any error pattern of weight
  $t \leq \frac{d-1}{2}$ in $\widetilde{\OO}(N^4)$ operations in $\K$.
\end{thm}

\begin{proof}
  As explained in Appendix~\ref{sec:appendix_voting} it is possible to
  perform the successive majority votings and to retrieve the error
  polynomial $E$ while performing a unique Gaussian elimination on an
  $N \times N$ matrix of rank $t$ together with $kN$ applications of Galois group
  elements. More precisely, Theorem~\ref{thm:non_naive_complexity}
  asserts that decoding costs $\OO (tN^2)$ operations in $\LL$ and
  $\OO (kN)$ applications of elements of $\GG$.

  From \cite[Cor.~11.11]{GG13}, operations in $\LL$ can be performed
  in $\widetilde{\OO}(N)$ operations in $\K$ as long as we are given a
  primitive element representation of $\LL/\K$. This gives a cost in
  $\widetilde{\OO}(tN^3) = \widetilde{\OO}(N^4)$ operations in $\K$
  for the Gaussian elimination.

  Next, any element of the Galois Group can be represented as an
  $N \times N$ matrix over $\K$ in the primitive element's basis
  $(1, x, x^2, \dots, x^{N-1})$. Such matrices can be pre-computed
  independently from the decoding. Then, the application of an element
  of $\GG$ can be performed in $\OO (N^2)$ operations in $K$. Thus, the
  calculation of $kN$ applications of Galois group elements has an
  overall cost in $\OO(kN^3)$ operations which, since $k \leq N$, is
  dominated by the $\widetilde{\OO}(N^4)$ cost of Gaussian elimination.
\end{proof}

\begin{rem}
  Actually the complexity can be made more precise and written as
  \[{\OO}(tN^3\log(N)\log\log(N) + k N^3).\]
\end{rem}

\subsection{Comparison with a previous work}
As already mentioned, a decoding algorithm was proposed in \cite{ACLN}
for rank metric Reed--Muller codes for $\nn = (n,n)$ but with a much smaller decoding radius.

\begin{exa}
  Consider $\GG= \ZZ/7\ZZ \times \ZZ/7\ZZ$, \emph{i.e.}, $\nn = (7,7)$
  and the code $\RM_{\TH}(4, \nn)$. This code has length $N = 49$,
  dimension $k = 15$ and minimum distance $d=21$.  According to
  \cite[Ex.~54]{ACLN}, the optimal choice consists in considering an
  error correcting pair with $a = 2$ and $b=5$ which permits to correct
  any error of rank $t\leq 6$ using the algorithm of \cite{ACLN}.

  Our algorithm corrects up to half the minimum distance \emph{i.e.}
  corrects any error of rank $\leq 10$.
\end{exa}

More generally, when $\GG = \ZZ / n\ZZ \times \ZZ/n\ZZ$ with
$n \rightarrow + \infty$ and $r \leq n$, \cite[Ex.~54]{ACLN} yields an
asymptotic analysis of the best decoding radius their algorithm can
achieve.  For $\rho = \lim_{n \rightarrow + \infty}\frac r n$ they can
correct about $(2 - \rho - \sqrt{3 - 2\rho})n^2$ errors while our
algorithm corrects up to half the minimum distance which asymptotically
corresponds to $(\frac{1-\rho}{2})n^2$ (under the assumption $r < n$).
The comparison with our algorithm is given in the Figure~\ref{fig:compare}.

\begin{figure}[h]
  \centering
      \begin{tikzpicture}[scale=0.75]      

        \pgfplotsset{every tick label/.append style={font=\normalsize}}
        
        \begin{axis}[
          xmin=0,
          xmax=1,
          ymin=0,
          ymax=0.7,
          xlabel={$\rho = \frac{r}{n}$},
          ylabel={radius $=\frac{t}{n^2}$},
          xtick={0.2, 0.4, ..., 0.8},
          xlabel style={anchor=north, at={(0.5,-0.08), font=\normalsize}},
          ylabel style={anchor=south, at={(0,0.5), font=\normalsize}},
          legend style= {anchor = north east, at={(0.985,0.985), font=\small}},
          cycle list name=mark list*
          ]
          
          \tikzstyle{my_style}=[domain=0:1, mark=none, samples=100]

          \addplot+[my_style, color=red]
          {(1-x)/2}; 
          \addlegendentry{This paper}

          \addplot+[my_style, color=black]
          {2-x-sqrt(3-2*x)}; 
          \addlegendentry{Decoding radius of \cite{ACLN}}
          
        \end{axis} 
    \end{tikzpicture}

    \caption{Comparison with \cite{ACLN} in the case
      $\GG = \ZZ /n \ZZ \times \ZZ/n\ZZ$ when $r<n$ and
      $n \rightarrow \infty$. The $x$-axis represents
      $\rho = \frac r n$ and the $y$--axis the relative decoding
      radius $\frac t {n^2}$.}
  \label{fig:compare}
\end{figure}

 \section{Conclusion and open questions }\label{sec:conclusion}

We give a deterministic decoding algorithm for rank metric Reed--Muller codes over an arbitrary Galois extension $\LL/\K$ which were introduced in \cite{ACLN} as $\LL$-linear subspaces of the skew group algebra $\LL[\GG]$ where $\GG = \Gal(\LL/\K)$. The decoding method can be seen as reconstruction of a $\TH$-polynomial, in particular, the error $\TH$-polynomial by recovering its unknown coefficients by majority vote for the unknown entries of the corresponding $\GG$-Dickson matrix. Our algorithm corrects any error pattern of rank up to half the minimum distance in $\widetilde{\OO} (N^{4})$ operations in $\K$, where $|\GG| = N$. We close with the following natural questions.

\begin{itemize}
\item  Is it possible to identify or construct rank metric codes (over finite or infinite fields) for which the majority voting method or minor cancellations of $\GG$-Dickson matrices allow to correct errors up to the unique decoding radius or beyond?
 
 \item Can the complexity of the decoding algorithm for rank metric Reed--Muller codes be improved from $\widetilde{\OO}(N^4)$?

\end{itemize}

\bibliographystyle{acm}

\appendix
\section{Minimizing the cost of Gaussian elimination while performing majority voting steps}\label{sec:appendix_voting}
If Algorithm~\ref{alg:dec} presented in
Section~\ref{sec:Dickson_decoding} corrects up to half the minimum
distance in polynomial time, it is actually not this efficient since
it includes many calls to Algorithm~\ref{alg:mv} which involves many
independent rank calculations and hence many independent uses of
Gaussian eliminations. In this appendix, we show how we can run mostly
the same algorithm while performing a single Gaussian elimination
process once for all.

\subsection{Context and setup}
We are given an Abelian extension $\LL/\K$ of degree $N$ and Galois
group $G$, the code $\RM_{\TH} (r, \nn)$ of dimension $k$ and a received $Y \in \LL[\GG]$
such that \[Y = C+E\] for some $C \in \RM_{\TH} (r, \nn)$ as in
Section~\ref{sec:Dickson_decoding} and $E \in \LL [\GG]$ with
$\rk (E) \leq t = \lfloor \frac{d-1}{2} \rfloor$ where $d$ denotes the
minimum distance of $\RM_{\TH} (r, \nn)$. Our objective is to compute
the matrix $\Dm = \Dm_{\GG}(E)$. It is already known from
Section~\ref{subsec:known_coefficients} that the entries of $\Dm$ are
partially known and, from Corollary~\ref{cor:known_coeffts}, all the
entries of $\Dm$ lying strictly below the diagonal $\Delta_{N-d}$ are
known. Unknown entries of $\Dm = \Dm_{\GG}(E)$ are written as formal
variables and will be iteratively specialised each time we discover a
new coefficient of $E$.

\noindent \textbf{Caution} Compared to Section
\ref{sec:Dickson_decoding}, where $\Dm$ always denotes the $\GG$-Dickson matrix $\Dm_{\GG}(E)$, in the present appendix, $\Dm$ will first denote this $\GG$-Dickson matrix with unknown entries written as formal variables. Then it will be modified by iteratively applying partial Gaussian elimination steps. Therefore, once we start running the algorithm, the matrix $\Dm$ will no longer be $\Dm_{\GG}(E)$ with
unknown variables. We chose to keep notation $\Dm$ by convenience.

\subsection{Elimination}
Given a row index $i_0$ and a column index $j_0$ corresponding to a known entry, we define the routine EliminateAbove which consists in
performing partial Gaussian elimination from the bottom by taking position $(i_0,j_0)$ as a pivot and eliminating any element above.

\begin{algorithm}
  \label{algo:elim_above}
  \caption{EliminateAbove($\Dm, i_0, j_0$)}
  \KwData{A matrix $\Dm$, a pivot position $(i_0,j_0)$ (\emph{i.e.}
    $\Dm_{i_0,j_0} \neq 0$).}
  \KwResult{Matrix $\Dm$ modified by partial elimination}

\medskip
  
  \For{$0 \leq i < i_0$}{
    $\text{Row}_i \gets \text{Row}_i - \Dm_{i,j_0} \Dm_{i_0,j_0}^{-1} \cdot \text{Row}_{i_0}$ \hfill /*  Rows of $\Dm$ are denoted
    ${(\text{Row}_i)}_{0 \leq i < N}$ */
  }
  \Return{$\Dm$}
\end{algorithm}

One easily observes that discrepancies and candidates are left
unchanged by EliminateAbove. What may change are candidate values.

\subsection{Finding the farthest unknown coefficient} Let us look at the
first step consisting in recovering the first furthest unknown coefficient
which, from Lemma~\ref{furthest}, is nothing but $e_{N-d}$.

The first step of the algorithm consists in starting a partial
Gaussian elimination from the bottom. Namely, for
$i = N-1, N-2, \ldots, N-d+1$ (taken by decreasing order), we consider
$\text{Row}_i$ of $\Dm$. If there are nonzero entries in this row and
lying below $\Delta_{N-d}$ then the leftmost nonzero entry is a pivot
with indexes $(i,j)$ for some $j < d$ and we run
EliminateAbove($\Dm, i,j$). Note that some row entries on the diagonal
and above are unknown and then some elimination operations are done
formally on variables.

Once this is done, then below the diagonal we get a few nonzero
rows whose leftmost nonzero entries are in distinct columns and these
positions are nothing but the known discrepancies. The situation is summarised
in Figure~\ref{fig:partial_elim}.

\begin{figure}[h]
  \caption{The situation after partial elimination below $\Delta_{N-d}$}
  \label{fig:partial_elim}
\begin{center}
  \begin{tikzpicture}
\draw[very thick] (0,0) rectangle (4,4);
\draw[thick, dashed] (0,2.8) -- (2.8,0);
\node[left, scale=.8] at (0,3) {$\Delta_{N-d}$};
\node[right, scale=.8] at (1.9,1.5) {$(i,j)~ \text{candidate}$};
\draw[gray, ->] (1.9,1.5) -- (1.48,1.4);
\fill (1.4,1.4) circle[radius=0.8 mm];
\fill (1.1,0.3) rectangle (1.25,0.45);
\draw[gray] (1.1,0.3) -- (4,0.3);
\draw[gray] (1.1,0.45) -- (4,0.45);

\fill (0.5,0.9) rectangle (.65,1.05);
\draw[gray] (0.5,0.9) -- (4,0.9);
\draw[gray] (0.5,1.05) -- (4,1.05);

\fill (0.2,1.8) rectangle (0.35,1.95);
\draw[gray] (0.2,1.8) -- (4,1.8);
\draw[gray] (0.2,1.95) -- (4,1.95);

\draw[gray, <-] (4,0.43) -- (5,1.4);
\draw[gray, <-] (4,1.9) -- (5,1.4);
\draw[gray, <-] (4,1.03) -- (5,1.4);
\node[right, scale=.8] at (5,1.4) {Rows with nonzero};
\node[right, scale=.8] at (5,1.1) {entries below $\Delta_{N-d}$};

\draw[gray, <-] (1.1,0.4) -- (-1.5, 1.4);
\draw[gray, <-] (0.5,1) -- (-1.5, 1.4);
\draw[gray, <-] (0.2,1.9) -- (-1.5, 1.4);
\node[left, scale=.8] at (-1.5,1.5) {Known};
\node[left, scale=.8] at (-1.5,1.23) {discrepancies};

\draw[dashed] (0,1.3) -- (1.5,1.3) -- (1.5,0); 
\draw[dashed] (0,1.5) -- (1.3,1.5) -- (1.3,0);
\node[scale=.5] at (.2,1.4) {$0 \ 0$};
\node[scale=.5] at (.75,1.4) {$\cdots \cdots$};
\node[scale=.5] at (1.15,1.4) {$0$};
\end{tikzpicture}
\end{center}
\end{figure}

Once this first partial elimination is done, one can observe that
\begin{enumerate}
\item Candidate positions on $\Delta_{N-d}$ are easy to identify: they correspond to positions $(i,j)$
  such that
  \begin{itemize}
  \item $\Dm_{\GG}(E)_{i,j}$ is conjugate to $e_{N-d}$;
  \item all the entries of $\Dm$ left to $(i,j)$ are zero;
  \item all the known discrepancies in $\Dm(i+1,j)$ are actually in
    $\Dm(i+1,j-1)$.
  \end{itemize}
\item The entry $\Dm_{i,j}$ equals $\gg_j(e_{N-d}) - \lambda_j$ where
  $\lambda_j$ is known since it comes from a linear combination of the
  entries lying strictly below $\Delta_{N-d}$. Moreover, since all the
  entries left to $\Dm_{i,j}$ are $0$, the natural candidate value is
  $0$ and hence the predicted value for $e_{N-d}$ is nothing but
  $\gg_j^{-1}(\lambda_j)$.
\end{enumerate}
Therefore, one can find $e_{N-d}$ by identifying candidate positions
on $\Delta_{N-d}$, then collecting predicted values $\gg_j^{-1}(\lambda)$ and, according to Proposition~\ref{prop:majority_voting},
the one that occurs in strict majority is $e_{N-d}$.

\subsection{Iteratively finding to other unknown coefficients}
Inductively, once a coefficient $e_\omega$ is found, we carry on Gaussian
elimination as follows:
\begin{itemize}
\item Substitute $e_{\omega}$ with the formal variable everywhere it
  occurs in $\Dm$;
\item All the candidates $(i,j)$ involved in the previous majority
  voting step that turned out to be false are new discrepancies and for
  them run EliminateAbove($\Dm, i,j$);
\item Let $\rho \in [0, \omega-1]$ be the index of the ``new'' farthest
  unknown coefficient, \emph{i.e.} the largest index of an unknown coefficient
  of $E$. Then for $i$ from $\omega - 1$ to $\rho + 1$ (by iteratively decreasing
  the value of $i$), if $\text{Row}_i$ contains nonzero elements below the diagonal
    $\Delta_\rho$ then let $j$ be the column index of the leftmost one and
    run EliminateAbove($\Dm, i,j$).
  \item Once partial elimination up to row $\rho 
  + 1$ is done a new majority
  voting process can be run in order to find $e_{\rho}$.
  This general majority voting process is described in Algorithm~\ref{algo:maj_vote_2}. 
\end{itemize}

\begin{algorithm}
  \caption{MajorityVoteWPE($\Dm, \omega$) \hfill  (``WPE'' stands for \emph{With Partial Elimination})}
  \label{algo:maj_vote_2}

  \KwData{Index $\omega$ for the diagonal; Matrix $\Dm$ partially eliminated below $\Delta_{\omega}$}
  \KwResult{Coefficient $e_{\omega}$}

\medskip
  
  Positions $\gets []$\\

  \medskip

\For{$j \in [0, N- \omega - 1]$}{
    \If{$\pinv(\omega+j) = \pinv(\omega)+\pinv(j)$}{
      Add $(\omega + j,j)$ to Positions. \hfill{/* Collect positions of conjugates of $e_{\omega}$ */}\\
      \hfill /* From Lemma~\ref{simple} they are the positions */\\
      \hfill /* $(\omega+j,j)$ such that $\Dm_{\GG}(E)_{\omega+j,j} = \gg_j(\omega)$  */
    }
  }
    \medskip

Values $\gets []$

    \medskip

    \For{$(\omega+j,j) \in \text{Positions}$}{ $\lambda_j \gets $ the
      element of $\LL$ such that $\Dm_{\omega+j, j}$ formally
      writes $\gg_j(e_{\omega}) - \lambda_j$.\\
      Add $\gg_j^{-1}(\lambda_j)$ to Values. \hfill /* Predicted value
      for $e_\omega$ */ }

\medskip
\Return{Element of Values occurring in strict majority}
\end{algorithm}

The whole decoding process is summarised in
Algorithm~\ref{algo:non_naive_decoding} below.

\begin{algorithm}
  \caption{Decoding algorithm with a single Gaussian elimination}
  \label{algo:non_naive_decoding}
  \KwData{Code $\RM_{\TH}(r,\nn)$ with minimum distance $d$; $Y \in \LL[\GG]$ such that $Y=C+E$
    for $C \in \RM_{\TH}(r,\nn)$ and $\rk E =t \leq \frac{d-1}{2}$.}
  \KwResult{The error $E$ as an element of $\LL[G]$}

\medskip
  
  \For{$i \in [0, N-1]$ such that $|\pinv(i)| > r$}{
    $e_i \gets y_i$ \hfill /* Collecting the known entries of $E$ */
  }

  \medskip

  $\Dm \gets $ Partial $\GG$-Dickson matrix for $E$ where unknown entries are formally
  written as ``$\gg_j(e_k)$''

  start $\gets N$\\
  far $\gets N-d$

  \medskip

\While{True}{
  
  \For{$i = \text{start}-1, \dots, \text{far}+1$ (in decreasing order)}{
    \If{$\Dm$ has nonzero entries below $\Delta_{\text{far}}$}{
      $j \gets$ column index of the leftmost nonzero entry in $\text{Row}_i$ below $\Delta_{\text{far}}$\\
      EliminateAbove($\Dm, i, j$)
    }
  }

  \medskip

  $e_{\text{far}} \gets$ MajorityVoteWPE($\Dm$, far)\\

  Substitute the value of $e_{\text{far}}$ everywhere it occurs formally in $\Dm$\\

  \medskip

  start $\gets$ far\\
  \If{Some entries of $E$ are still unknown}{
    far $\gets$ largest $i \in [0,N-1]$ such that $e_i$ is unknown
  }
  \Else{
    \Return {$E$}
  }
}
\end{algorithm}

\subsection{Complexity}\label{sec:app_complexity}
\subsubsection{Cost of elimination}
The procedure EliminateAbove costs $\OO (N^2)$ operations in $\LL$.
In the whole algorithm, the number of calls to EliminateAbove is
bounded by the number of discrepancies and hence by the rank of $E$.
While running Algorithm~\ref{algo:non_naive_decoding} some operations
are done only formally and postponed until the value of the unknown
coefficients is known. Still, it is as if we ran all the operations
required in $t$ successive calls of EliminateAbove but not in the
right order since some of them are postponed. Therefore, the overall
cost of Gaussian elimination in $\OO (tN^2)$ operations in $\LL$.

\begin{rem}
  Since elimination operations are done in a very specific order,
  we cannot invoke a possible use of fast linear algebra.
\end{rem}

\subsubsection{Cost of Galois action}
A call to MajorityVoteWPE involves $\OO(N)$ applications of an element
of $\GG$ (in order to compute the $\gg_j^{-1}(\lambda_j)$'s).  Once
the majority vote is done, the substitution of $e_{\omega}$ in $\Dm$
costs another $N$ applications of an element of $\GG$.  The number of
calls to MajorityVoteWPE is $k$ since $E$ has exactly $k$ unknown
coefficients when the algorithm starts, which leads to an overall
cost of $\OO (kN)$ evaluations of elements of $\GG$.

The following statement summarises the discussion on the complexity.

\begin{thm}\label{thm:non_naive_complexity}
  Algorithm~\ref{algo:non_naive_decoding} costs $\OO(tN^2)$ operations in
  $\LL$ and $\OO(kN)$ applications of elements of the Galois Group.
\end{thm}

\end{document}